\newtheorem{theorem}{Theorem}
\newtheorem{lemma}{Lemma}
\newtheorem{remark}{Remark}
\newtheorem{definition}{Definition}
\newtheorem{corollary}{Corollary}
\newtheorem{proposition}{Proposition}
\newcommand{\beq}{\begin{equation}}
\newcommand{\eeq}{\end{equation}}
\newcommand{\bea}{\begin{array}}
\newcommand{\ena}{\end{array}}
\newcommand{\bds}{\begin {itemize}}
\newcommand{\eds}{\end {itemize}}
\newcommand{\bdf}{\begin{definition}}
\newcommand{\blm}{\begin{lemma}}
\newcommand{\edf}{\end{definition}}
\newcommand{\elm}{\end{lemma}}
\newcommand{\bthm}{\begin{theorem}}
\newcommand{\ethm}{\end{theorem}}
\newcommand{\bprp}{\begin{prop}}
\newcommand{\eprp}{\end{prop}}
\newcommand{\bcl}{\begin{claim}}
\newcommand{\ecl}{\end{claim}}
\newcommand{\bcr}{\begin{coro}}
\newcommand{\ecr}{\end{coro}}
\newcommand{\bquest}{\begin{question}}
\newcommand{\equest}{\end{question}}
\newcommand{\larrow}{{\larrow}}
\def\urltilda{\kern -.15em\lower .7ex\hbox{\~{}}\kern .04em}
\begin{document}
\title{Distributed Learning Algorithms for Spectrum Sharing in Spatial Random Access Wireless Networks}
\author{Kobi Cohen, Angelia Nedi\'c and R. Srikant
\thanks{Kobi Cohen is with the Department of Electrical and Computer Engineering, Ben-Gurion University of the Negev, Beer Sheva 8410501 Israel. Email:yakovsec@bgu.ac.il}
\thanks{Angelia Nedi\'c is with the with School of Electrical, Computer and Energy Engineering, Arizona State University, Tempe, AZ 85281, USA. Email:angelia.nedich@asu.edu}
\thanks{R. Srikant is with the Coordinated Science Laboratory and Department of Electrical and Computer Engineering, University of Illinois at Urbana-Champaign, IL 61801 USA. Email:rsrikant@illinois.edu}
\thanks{This work was supported by AFOSR MURI FA 9550–-10–-1–-0573, ONR Grant N00014–-13–-1–-003, NSF Grant CNS–-1161404, and ARO Grant W911NF--16--1--0259.}
\thanks{Part of this work was presented at the 13th International Symposium on Modeling and Optimization in Mobile, Ad Hoc and Wireless Networks (WiOpt), 2015 \cite{Cohen_2015_Distributed}.}
\thanks{Accepted for publication in the IEEE Transactions on Automatic Control.}
\thanks{Copyright (c) 2016 IEEE. Personal use of this material is permitted. Permission from IEEE must be obtained for all other
uses, in any current or future media, including reprinting/republishing this material for advertising
or promotional purposes, creating new collective works, for resale or redistribution to servers or lists,
or reuse of any copyrighted component of this work in other works.}
}
\date{}
\maketitle

\begin{abstract}
\label{sec:abstract}
We consider distributed optimization over orthogonal collision channels in spatial random access networks. Users are spatially distributed and each user is in the interference range of a few other users. Each user is allowed to transmit over a subset of the shared channels with a certain attempt probability. We study both the non-cooperative and cooperative settings. In the former, the goal of each user is to maximize its own rate irrespective of the utilities of other users. In the latter, the goal is to achieve proportionally fair rates among users. Simple distributed learning algorithms are developed to solve these problems. The efficiencies of the proposed algorithms are demonstrated via both theoretical analysis and simulation results.
\end{abstract}
%
\def\keywords{\vspace{.5em}
{\bfseries\textit{Index Terms}---\,\relax%
}}
\def\endkeywords{\par}
\keywords
Distributed optimization, collision channel, slotted-ALOHA, Nash equilibrium, proportional fairness.
\section{Introduction}
\label{sec:introduction}

Spectrum scarcity along with the increasing demand for wireless communication have triggered the development of efficient spectrum access schemes for wireless networks. A good overview of the various dynamic spectrum access models for MAC can be found in \cite{Zhao_Survey_2007}. In this paper we focus on the open sharing model among users that acts as the basis for managing a spectral region (e.g., WiFi, cognitive radio, sensor networks, and unlicensed band technology \cite{Zhao_Survey_2007}).

Consider a spatial wireless network with $N$ users sharing $K$ collision channels. Each user is in the interference range of a few (but not necessarily all) other users, referred to as neighbors (e.g., when the distance between users is small they cause mutual interference). In the beginning of each time slot, each user is allowed to transmit over $M$ channels ($1\leq M\leq K$) with a certain attempt probability (i.e., using the slotted-ALOHA protocol). If two or more neighbors transmit simultaneously over the same channel, a collision occurs. In multi-channel systems, exploiting the channel diversity plays an important role in designing effective channel allocation protocols. The channel conditions are a function of both the inherent quality of each channel due to fading, shadowing, etc., as well as the interference caused by the users that use the channel. Thus, it is intuitive that users can improve performance by adaptively choosing channels with a higher probability of being idle as well as higher capacity when idle. We are interested in finding a channel allocation and attempt probabilities in a distributed manner so as to optimize certain objectives in the network.

\subsection{Game theory, Distributed Optimization, and Learning for Spectrum Access Protocols}
\label{ssec:intro_game}

Spectrum access protocols can be broadly classified into two classes: (i) protocols in which users do not share information with each other, due to security or overhead considerations, and (ii) protocols in which information is shared to achieve a common goal, such as in networks which are controlled by a single provider. Achieving an effective channel allocation for the spectrum access problem in a distributed manner requires users to adaptively adjust their actions (i.e., select channels and attempt probabilities) based on local information about the current state of the system. Thus, the first question of interest is whether the system keeps oscillating due to frequent channel switching, or whether the system converges to a stable operating point. When users do not share information, a stable channel allocation may not be a system-wide optimal solution (though it reduces the undesirable effects of frequent channel switching and also demonstrated good performance in some network models and typical scenarios, as in \cite{Yu_2002_Distributed, Leshem_Multichannel_2012, Cohen_Game_2013}). Thus, the second question of interest is whether small amounts of information sharing can lead to a globally-optimal operating point.

Game theory provides a rich set of tools to analyze a dynamic behavior of a system when entities (or players) in the system take actions to optimize a predefined objective. Thus, using game theoretic models to analyze wireless networking protocols and algorithms, in which users (i.e., players) adjust their strategies (e.g., attempt probabilities, transmission power, selected channels, etc.) so as to optimize the system performance has been attracted much attention in recent years. Related work on networking games can be found under a non-cooperative setting in \cite{Mackenzie_Selfish_2001, Jin_Equilibria_2002, Alpcan_2002_CDMA, Han_2005_Fair, Altman_2006_Survey, Leshem_2006_Brgaining, Ozdaglar_2007_Incentives, Boche_2007_Non, Nokleby_2007_Cooperative, Menache_Rate_2008, Inaltekin_2008_Analysis, Jorswieck_2008_Miso, Gao_2008_Game, Leshem_2008_Cooperative, Candogan_Competitive_2009, Altman_2009_Operating, Menache_2011_Network, cohen2012game, Cohen_Game_2013, Cohen_Distributed_2013} and under a cooperative setting in  \cite{Leshem_2006_Brgaining, Boche_2007_Non, Jorswieck_2008_Miso, Gao_2008_Game, Nokleby_2007_Cooperative, Han_2005_Fair, Leshem_2008_Cooperative, cohen2013distributed, Cohen_Distributed_2013}. Since generally we are interested in networking protocols that require small amounts of information sharing (if any) and distributed in nature, it is often desired to develop efficient distributed learning and optimization methods to achieve the target solution.

\subsection{Main Results}
\label{ssec:main}

We first examine the case where users do not share information with each other. The achievable rate of each user increases with its own attempt probability, when other attempt probabilities are fixed. Thus, a natural approach to achieve a good operating point is to allow every user to maximize its own rate under a constraint on the allowed attempt probability\footnote{Similar approaches were applied in \cite{Menache_Rate_2008, Jin_Equilibria_2002, Cohen_Game_2013}, all resulting with an individual rate and attempt probability for every user.} (where different attempt probability constraints are used to prioritize different users in the network), referred to as distributed rate maximization. Next, we summarize our main results in this respect. (i) In \cite{Cohen_Game_2013}, the special case of a fully connected network (i.e., all users are in the same interference range) and $M=1$ (i.e., each user is allowed to transmit over only one channel) was considered, and a distributed algorithm was applied to solve the distributed rate maximization problem, in which each user updates its strategy using its local channel state information (CSI) and by monitoring the load over the channels. It was shown that any finite improvement path (not necessarily best-response) across users, in which at each iteration the rate of a single user increases when it updates its channel-selection strategy given the current system state, reaches an equilibrium in the sense that no user can increase its rate by unilaterally changing its strategy. In this paper, however, we consider a more general case where each user interferes only with its neighbors, and $M\geq 1$ (i.e., every user is allowed to transmit over multiple channels). Interestingly, we show that cycles may occur under some improvement paths in this general model. To solve this problem, we use the theory of \emph{best-response (BR) potential games}, introduced by Voorneveld in 2000 \cite{Voorneveld_2000_Best}. In BR potential games, cycles may occur under some improvement paths, though no cycles occur under a BR dynamics. We prove that the system dynamics can be formulated as a BR potential game. This result constitutes an important contribution from a game theoretic perspective as well as MAC design perspective, since it generalizes existing results on Nash equilibria (NE) in \cite{Mavronicolas_Congestion_2007, Chen_2013_Distributed, Cohen_Game_2013} (see a more detailed discussion in Section \ref{ssec:related}). (ii) Based on our analysis, we then propose a distributed BR learning algorithm that solves the distributed rate maximization problem and converges to an equilibrium in finite time. The convergence result described above requires a coordination mechanism that enables users to update their actions sequentially. We then propose a simpler mechanism that guarantees convergence as time increases even without coordination among users (thus, users may update actions simultaneously). We further extend our convergence result to cases where each user may have a different set of available resources, which captures the situation of a hierarchical model as in cognitive radio networks (see Section \ref{sec:rate_maximization} for details). Thus, these results enable us to design MAC protocols for a wide range of practical system models. (iii) Since multiple NEs may exist, we finally analyze the efficiency of the NEs that the algorithm may converge to. It should be noted that very little is known about the efficiency of the NEs under related models as considered in this paper, particularly when interference across users forms a graph structure. A popular performance measure for a NE efficiency is the Price of Anarchy (PoA), which is the ratio between the optimal performance and the worst equilibrium. The PoA (with respect to the sum utility) has been analyzed in \cite{law2012price} under the special case of a fully connected network (i.e., the interference graph is complete) and equal attempt probability for all users. In this paper, we analyze performance at equilibrium on average rather than worst case performance, which is useful particularly in the context of wireless networks since we are generally interested in the expected performance of users in the long run. Specifically, it is shown that under some mild conditions (see Section \ref{ssec:efficiency} for details), implementing the proposed algorithm on regular conflict graphs guarantees that every user in the system improves its performance (in terms of expected rate) as compared to a naive algorithm, in which users choose channels for transmission randomly without performing congestion monitoring used to adjust their strategies as proposed in this paper. Significant performance gain (more that $170\%$ improvement) is obtained under a low collision level.

Second, we focus on a cooperative setting, in which the goal is to achieve the optimal channel allocation and attempt probabilities that attain proportionally fair rates in the network. When $K=1$ (i.e., a single channel case), users have no freedom to choose among different channels, and the action of each user degenerates to setting the optimal attempt probability for transmission over the single channel. Low-complexity algorithms have been developed in \cite{Kar_2004_Achieving, Wang_2006_Cross, Gupta_2012_Throughput} under various models of a single collision channel. In this paper, however, we address this question for multi-channel networks (i.e., $K\geq 1$) where every user is allowed to choose a single channel for transmission (i.e., $M=1$) among the $K$ channels and to set the optimal attempt probability for transmission over the channel\footnote{Accessing a single-channels is often assumed due to hardware constraints or when it is desired to limit the congestion level in high-loaded systems. It has been widely assumed in cognitive radio applications, WiFi, sensor networks, etc. It should be noted, however, that developing a tractable optimal solution for the proportional fairness problem under the case where users are allowed to access two or more channels at a time remains an open question.}. Direct computation of the optimal channel allocation and attempt probabilities that attain proportionally fair rates for the multi-channel ALOHA network considered in this paper is a combinatorial optimization problem over a graph. Furthermore, it requires a centralized solution that uses global information which is impractical in large-scale networks. Next, we summarize our main results in this respect. (i) We study the problem from a game theoretic perspective and develop a novel cooperative distributed algorithm based on log-linear learning, referred to as noisy BR, to achieve the target solution in a distributed manner. Specifically, at each iteration, using message exchanges between neighbors only, selected users take actions with respect to a cooperative utility that balances between their own utilities and the interference level they cause to their neighbors given the current system state. In noisy BR dynamics, users play the BR that maximizes their cooperative utilities with high probability, while suboptimal responses are taken with small probabilities to escape local maxima. We prove that the proposed cooperative algorithm converges to the global proportional fairness solution with high probability as time increases. Furthermore, we show that every Nash equilibrium attained by the algorithm can be reached in finite time by playing BR and it is a good operating point in the sense that proportionally fair rates are attained locally among all users sharing the same channel. (ii) The proposed algorithm significantly simplifies the implementation as compared to existing methods. First, it requires less amount of information sharing between nodes. Second, synchronization in a neighborhood with respect to action updates is not required (see Section \ref{ssec:related} for a more detailed discussion on related works).

\subsection{Related Work}
\label{ssec:related}

Spectrum access and sharing have attracted much attention in past and recent years. We next discuss related works that use game theoretic models, distributed optimization, and learning techniques, some of them have been discussed in Sections \ref{ssec:intro_game}, \ref{ssec:main}, and highlight the main differences in the model, analysis and results obtained in this paper as compared to the related existing studies.

\textbf{ALOHA-based Protocols and Cross Layer Optimization}. ALOHA-based protocols have been widely used in wireless communication primarily because of their ease of implementation and their random nature. Related work on ALOHA-based protocols can be found in \cite{Pountourakis_Analysis_1992, Mackenzie_Selfish_2001, Jin_Equilibria_2002, Shen_Stabilized_2002, Altman_2004_Slotted, Bai_Opportunistic_2006, Menache_Rate_2008, To_2010_Exploiting, Cohen_Game_2013, Cohen_Distributed_2013, Wu_2013_Fasa} for fully connected networks and in \cite{Kar_2004_Achieving, Wang_2006_Cross, Baccelli_2006_Aloha, kauffmann2007measurement, Baccelli_2009_Stochastic, Gupta_2012_Throughput, Chen_2013_Distributed, Hou_2014_Proportionally} for spatial networks. Stability of a selfish behavior dynamics in a single-channel ALOHA system was studied in \cite{Mackenzie_Selfish_2001}. Equilibria under rate demands have been analyzed in \cite{Jin_Equilibria_2002, Menache_Rate_2008}. In this paper, however, we focus on the multi-channel case. In \cite{Chen_2013_Distributed, Cohen_Game_2013}, the multi-channel ALOHA case was studied, where $M=1$. In \cite{Chen_2013_Distributed}, the authors have developed a distributed algorithm, in which a mixed strategy was applied to obtain local information in a spatially distributed network. In \cite{Cohen_Game_2013}, a pure strategy was applied, where the local information was obtained by sensing the spectrum in a fully connected network. When $M=1$, the log-rate of each user under an ALOHA model can be expressed as a linear combination of its inherent log-rate minus the log-interference caused by its neighbors (i.e., an affine function, see (\ref{eq:log_rate}) for details). As a result, due to the monotonicity of the logarithm, analysis of Nash equilibria when $M=1$ under the non-cooperative setting follows by applying a variation of the ordinal potential function introduced in \cite{Mavronicolas_Congestion_2007} for affine utilities. Thus, any improvement path (not necessarily best-response) across users, in which at each iteration the rate of a user increases when it updates its channel-selection strategy given the current system state (i.e., sequential updating), reaches an equilibrium in the sense that no user can increase its rate by unilaterally changing its strategy. In this paper, however, we consider the case where $M\geq 1$, in which cycles may occur under some improvement paths and the dynamics does not obey an ordinal potential function. Thus, our Nash equilibria analysis using the theory of best-response potential games (as described in Section \ref{ssec:main}) generalizes the Nash equilibria results obtained in \cite{Mavronicolas_Congestion_2007, Chen_2013_Distributed, Cohen_Game_2013}. It also generalizes the equilibria results in \cite{Southwell_2012_Convergence} (that assumes that each node contributes equally to the congestion of a resource) due to different attempt probabilities across users considered here. It should be noted that avoiding simultaneous updates across users can be done by allowing each user to draw a random backoff time and update its strategy when the backoff time expires. However, we will show convergence of the algorithm even without this mechanism. Stability of multi-channel ALOHA systems was studied in \cite{Saadawi_1981_Aanalysis, Ghez_1988_Stability, Pountourakis_Analysis_1992, Shen_Stabilized_2002}. In \cite{Baccelli_2006_Aloha, Baccelli_2009_Stochastic}, spatial single-channel ALOHA networks have been studied under interference channels using stochastic geometry. Opportunistic ALOHA schemes that use cross layer MAC/PHY techniques, in which the design of Medium Access Control (MAC) is integrated with physical layer channel information to improve the spectral efficiency, have been studied under both the single-channel \cite{Bai_Opportunistic_2006, Menache_Rate_2008, Baccelli_2009_Stochastic} and multi-channel \cite{Bai_Opportunistic_2006, To_2010_Exploiting, Cohen_Game_2013, Cohen_Distributed_2013} cases. Other related studies considered recently opportunistic carrier sensing in a cross-layer design \cite{cohen2009time, cohen2010time, cohen2010likelihood, cohen2011energy, Leshem_Multichannel_2012}. A cross-layer MAC/PHY methodology is used in this paper to design efficient distributed algorithms for the problems under study.

\textbf{Distributed Learning and Optimization for a Fair Spectrum Sharing.} Achieving proportionally fair rates in spatial random access networks (which considered in this paper under the cooperative setting as described in Section \ref{ssec:main}) has been studied under the single collision channel case ($K=1$) in \cite{Kar_2004_Achieving, Wang_2006_Cross, Gupta_2012_Throughput} and the multi collision channel case ($K\geq 1$, $M=1$) in \cite{Hou_2014_Proportionally} (as considered in this paper). The algorithm developed in \cite{Hou_2014_Proportionally} uses a Gibbs sampler over local maxima that converges to a global maximum as time increases. The algorithm requires information sharing between nodes up to second neighborhood at each iteration. It further requires perfect synchronization in a neighborhood with respect to action updates in the sense that once a node updates its strategy all its neighbors must update their strategies accordingly. In this paper, however, we develop an algorithm that requires information sharing between a single node and its neighbors only (i.e., first neighborhood) at each iteration, and synchronization in a neighborhood with respect to action updates is not required. Once a node updates its strategy, its neighbors may or may not update their strategies. Thus, convergence of our scheme is robust against stubborn neighbors, temporary communication link failures, etc. The proposed algorithm is based on log-linear learning techniques (see \cite{Young_1998_Individual, Marden_2012_Revisiting} for more details on the theory of log-linear learning), and use a game theoretic perspective to analyze the algorithm's performance. Similar idea for using altruistic plus selfish components in the algorithm design under a channel and cell selection problem has been identified in \cite{kauffmann2007measurement}, where global optimum was obtained via Gibbs sampler. The MAC layer protocol between users was assumed given and the question of interest is concerned with the interference mitigation between cells which are co-exist in the same frequency bands. Furthermore, the objective aimed at minimizing the minimum potential delay (and not obtaining proportionally fair rates as considered in this paper). On algorithm development, the model in \cite{kauffmann2007measurement} requires each user to computes the aggregate utility of its own and all users in the network that communicate with the same AP (via a utility of the form of $1/f(SNR)$). Consequently, the resulting algorithm in \cite{kauffmann2007measurement} is fundamentally different from the one developed in this paper under the cooperative setting. Other related studies that use log-linear learning and Gibbs sampling techniques under different spectrum access models and objectives can be found in \cite{Xu_2012_Opportunistic, Herzen_2013_Distributed, Singh_2013_Distributed, Jang_2014_Distributed, Herzen_2015_Learning}.

\textbf{Game Theoretic Models for Communication Systems}. Cooperative game theoretic optimization has been studied under frequency flat interference channels in the SISO \cite{Leshem_2006_Brgaining, Boche_2007_Non}, MISO \cite{Jorswieck_2008_Miso, Gao_2008_Game} and MIMO cases \cite{Nokleby_2007_Cooperative}. The frequency selective interference channels case has been studied in \cite{Han_2005_Fair, Leshem_2008_Cooperative}. The collision channels case has been studied under a fully-connected network and without information sharing between users in \cite{Cohen_Distributed_2013}, where the global optimum was attained under the asymptotic regime (i.e., as the number of users $N$ approaches infinity) and the i.i.d assumption on the channel quality. In this paper, however, we study distributed optimization of the user rates under the cooperative setting for spatial networks where information sharing between neighbors is allowed. We show that proportionally fair rates are attained for any number $N\ge 1$ of users without any assumption on the network topology or channel distribution.

Other related game theoretic models have been used in cellular, OFDMA, and 5G systems \cite{scutari2006potential, el2012stable, zhao2014coordinated, zhang2014distributed, wang2016dense}. In \cite{scutari2006potential}, the authors focused on a power control model, where exact and ordinal potential game models have been investigated. In \cite{el2012stable}, a joint uplink/downlink subcarrier allocation in OFDMA systems has been investigated via a two-sided stable matching game formulation. In \cite{zhao2014coordinated}, the interference mitigation problem in the downlink of multicell networks via base station coordination has been studied via a potential game framework. In \cite{zhang2014distributed}, the authors investigated channel utilization via a distributed matching approach. In \cite{wang2016dense}, a distributed power self-optimization problem has been studied for the downlink operation of dense femtocell networks via a noncooperative exact potential game formulation. This paper, however, considers a fundamentally different model, where communication is over collision channels (i.e., interferences are caused by the MAC layer's attempt probabilities), and the optimization variables are channel allocation and attempt probabilities. From a game theoretic perspective, we show that some improvement paths may result in cycles under the noncooperative setting (thus, the game dynamics does not obey exact or ordinal potential functions). Instead, we formulate the game as a best-response potential game, where it is shown that best-response dynamics converges.

\textbf{Spectrum Access as a Graph Coloring Problem}. Another set of related works is concerned with modeling the spectrum access problem as a graph coloring problem, in which users and channels are represented by vertices and colors, respectively. Thus, coloring vertices such that two adjacent vertices do not share the same color is equivalent to allocating channels such that interference between neighbors is being avoided (see \cite{Wang_2005_List, Wang_2009_Improved, Checco_2013_Learning, Checco_2014_Fast} and references therein for related works). However, the problem considered in this paper is different since we mainly focus on the case where the number of users is much larger than the number of channels (thus, coloring the graph may be infeasible). Furthermore, in our case users may select more than one channel, and may prefer some channels over others, as well as optimize their rates with respect to the attempt probability.

The rest of the paper is organized as follows. In Section~\ref{sec:network} we describe the network model. In Sections~\ref{sec:rate_maximization} and \ref{sec:fairness} we consider the noncooperative and cooperative settings, respectively. In Section~\ref{sec:simulation} we provide simulation results. Section~\ref{sec:conclusion} concludes the paper.

\section{Network Model}
\label{sec:network}

We consider a wireless network consisting of a set $\mathcal{N}=\left\{1, 2, ..., N\right\}$ of users (or transceiver links) and a set of $\mathcal{K}=\left\{1, 2, ..., K\right\}$ of shared channels (where typically $N>K$). We focus on a spatial wireless network, where each user is in the interference range of a few (but not necessarily all) other users. We assume symmetric interference ranges for all users in the sense that user $n$ is in user $r$'s interference range only if user $r$ is in user $n$'s interference range for all $n, r\in\mathcal{N}$. We refer to users in the same interference range as \emph{neighbors}, and define $\mathcal{I}_n\subseteq(\mathcal{N}\setminus n)$ as the set of user $n$'s neighbors (i.e., the interference range equals the communication range when considering communication between neighbors). We assume that users are backlogged, i.e., all $N$ users always have packets to transmit. In the beginning of each time slot, each user (say $n$) is allowed to transmit over $M$ channels ($1\leq M\leq K$) with a certain attempt probability (i.e., using the slotted-ALOHA protocol). Let $\mathcal{K}_M$ be the set of all $M$-element subsets of $\mathcal{K}$ (i.e., $\mathcal{K}_M$ is the set of all channel-selection strategies that a user can choose). Let $\sigma_n=(k_n,p_n)$ be the strategy of user $n$, where $k_n=\left\{k_{n,i}\right\}_{i=1}^M\in\mathcal{K}_M$ denotes the set of chosen channels and $0\leq p_n\leq 1$ denotes the attempt probability of user $n$. Thus, when user $n$ decides to transmit (which occurs with probability $p_n$) it uses all the channels in $k_n$ for transmission. We define $\sigma$ as the strategy profile for all users, and $\sigma_{-n}$ as the strategy profile for all users except user $n$.

The topology of the interference model can be represented by an undirected graph $G= (\mathcal{N}, E)$, where the set of users are represented by the vertices and the interference relationships between users are represented by the set of edges $E$. An edge $(n, r)\in E$ means that users $n$ and $r$ are in the same interference range. The set of user $n$'s neighbors $\mathcal{I}_n$ is represented by vertices directly connected to vertex $n$ excluding vertex $n$ itself. An illustration is given in Fig. \ref{fig:small_network} in Section \ref{sec:simulation}.

We consider transmissions over orthogonal collision channels. Thus, transmission by user $n$ over channel $k_{n,i}$ is successful only if no user $r\in\mathcal{I}_n$ transmits over channel $k_{n,i}$ in the same time-slot. However, if user $n$ and at least one more user in $\mathcal{I}_n$ transmit simultaneously over channel $k_{n,i}$ in the same time slot, a collision occurs. The achievable rate of user $n$ over channel $k$ given that a transmission is successful, referred to as collision-free utility, is denoted by $u_n(k)\geq 0$ (i.e., Shannon capacity). We consider long-term rates where $u_n(k)$ remain fixed across time slots during the running-time of the algorithms (e.g., mean-rate, or slow-fading effect). It should be noted that the algorithm dynamics and convergence analysis
hold under any network topology and when rates (i.e., channel gains) may be different across users and frequencies. However, equal channels are required for purposes of analysis in Section \ref{ssec:efficiency}.

Define the success probability of user $n$ on channel $k$, given the strategy profile of other users, as follows:
\beq
\label{eq:v}
\bea{l}
\displaystyle v_{n}(k,\sigma_{-n})\triangleq \prod_{i\in\mathcal{I}_n}{\left(1-p_i\right)^{\mathbf{1}_i(k)}}
  \;,
\ena
\eeq
where ${\mathbf{1}_i(k)}=1$ if $k\in k_i$ and ${\mathbf{1}_i(k)}=0$ otherwise. Hence, the expected rate of user $n$ over channel $k_{n,i}$ is given by:
\beq
\label{eq:R}
\bea{l}
\displaystyle r_n\left(k_{n,i}, p_n, \sigma_{-n}\right)
=p_n u_n(k_{n,i})v_n(k_{n,i},\sigma_{-n})\;.
\ena
\eeq
Note that the log-rate of user $n$ over channel $k_{n,i}$ is given by
\beq
\label{eq:log_rate}
\displaystyle\log r_n\left(k_{n,i}, p_n, \sigma_{-n}\right)
=\log \left(u_n(k_{n,i})p_n\right)-I_n(k_{n,i},\sigma_{-n}),
\eeq
where $I_n(k,\sigma_{-n})$ is referred to as the \emph{log-interference} function and is given by:
\beq
\label{eq:log_interference}
\displaystyle
I_n(k,\sigma_{-n})\triangleq-\log v_{n}(k,\sigma_{-n})=\sum_{i\in\mathcal{I}_n}{\log\left(\frac{1}{1-p_i}\right)\mathbf{1}_i(k)}\;.
\eeq
Note that $I_n(k,\sigma_{-n})$ can be viewed as the log-interference that user $n$ experiences over channel $k$ caused by its neighbors that transmit over the same channel. Finally, the expected rate of user $n$ is given by:
\beq
\label{eq:R2}
\bea{l}
\displaystyle 
R_n\left(\sigma\right)
\triangleq\sum_{i=1}^{M} r_n\left(k_{n,i}, p_n, \sigma_{-n}\right)\;.
\ena
\eeq

Throughout the paper, we will develop distributed algorithms to optimize certain objectives in the network. Theoretically, convergence analysis often requires users to update their strategies in a sequential manner. Avoiding simultaneous updates in communication systems is often done by allowing each user to draw a random backoff time and update its strategy when its backoff time expires (as discussed in Section \ref{ssec:related}). For simplicity, we will assume a similar mechanism here. Specifically, it is assumed that users hold a global clock and may update their strategies only at times $t_1, t_2, ...$, referred to as \emph{updating times}. At each updating time, every user draws a backoff time from a continuous uniform distribution over the range $[0, B]$ for some $B>0$. A user whose backoff time expires may broadcast a pilot signal to its neighbors, indicating that its strategy has been updated or start transmitting its data and its neighbors can sense activity. Then, all its neighbors keep their strategies fixed until the next updating time. Note that neighbors will not update their strategies simultaneously, and the time interval for data transmissions is set to be higher than $B$. At each updating time, we refer to users that update their strategies as \emph{active users}. The set of active users is denoted by $\mathcal{N}_a$ (which is time-varying across updating times). In Tables \ref{tab:BR_DRM}, \ref{tab:NBRF} (Step $3$) we refer to this mechanism as a selection of active users. It should be noted, however, that convergence of the algorithm discussed in Section \ref{ssec:BR_DRM} will be shown even without this coordination mechanism.

\section{Distributed Rate Maximization: \\ A Non-Cooperative Setting}
\label{sec:rate_maximization}

In this section we consider the case where every user (say $n$) maximizes its own rate given the current system state under a constraint $P_n$  on its allowed attempt probability, i.e., $p_n\leq P_n$ where $P_n<1$ (see Section \ref{ssec:main} for motivation of this problem). Since maximizing the rate given the current system state results in a transmission with the maximal allowed attempt probability $P_n$, the strategy for user $n$ degenerates to choosing the subset of channels $k_n$ that maximizes its own rate under a fixed attempt probability $P_n$. As a result, the strategy played by user $n$ given a fixed strategy profile of other users $\sigma_{-n}$ is given by $\sigma_n=(k_n^*, P_n)$, where $k_n^*=\left\{k_{n,i}^*\right\}_{i=1}^{M}$ solves the following distributed rate maximization problem\footnote{For ease of presentation, we assume continuous random rates $u_n(k)$ to guarantee a uniqueness of the maximizer. Otherwise, channels with the same rate can be ordered arbitrarily.}:
\beq
\label{eq:pre_non_cooperative_optimization}
\bea{l}
\displaystyle k_n^*=\arg\;\max_{k_n\in\mathcal{K}_M}\hspace{0.3cm} R_n\left(\sigma\right) \hspace{0.3cm} \mbox{s.t.} \hspace{0.3cm} \displaystyle p_n=P_n
  \;.
\ena
\eeq
Since $R_n\left(\sigma\right)=p_n\sum_{i=1}^M u_n(k_{n,i})v_n(k_{n,i},\sigma_{-n})$ and $p_n=P_n$ in (\ref{eq:pre_non_cooperative_optimization}) is a constant independent of $k_n$, it suffices to solve:
\beq
\label{eq:non_cooperative_optimization}
\bea{l}
\displaystyle k_n^*=\arg\;\max_{k_n\in\mathcal{K}_M}\hspace{0.3cm} \sum_{i=1}^M u_n(k_{n,i}) v_n(k_{n,i},\sigma_{-n})
  \;.
\ena
\eeq
For every user $n$ let $\left\{k_{n,1}^*, k_{n,2}^*, ..., k_{n,K}^*\right\}$ be a permutation of $\left\{1, ..., K\right\}$ such that:
\beq
\label{eq:BR_DRM_sol2}
\bea{l}
u_n(k_{n,1}^*)v_n(k_{n,1}^*,\sigma_{-n})\geq u_n(k_{n,2}^*)v_n(k_{n,2}^*,\sigma_{-n}) \vspace{0.0cm} \\ \hspace{3cm}
\geq\cdots\geq u_n(k_{n,K}^*)v_n(k_{n,K}^*,\sigma_{-n})\;.
\ena
\eeq
Following (\ref{eq:non_cooperative_optimization}), the channel-selection strategy that solves (\ref{eq:pre_non_cooperative_optimization}) at each given updating time is given by:
\beq
\label{eq:BR_DRM_sol3}
k_n^*=\left\{k_{n,1}^*, k_{n,2}^*, ..., k_{n,M}^*\right\}\;.
\eeq

Note that in practical systems, user $n$ holds an estimate of $u_n(k)$ (from pilot signals for instance). On the other hand, complete information about other user strategies is not required. Monitoring the channels to obtain $v_n(k,\sigma_{-n})$ for all $k$ is sufficient to make a decision\footnote{Note that the number of idle time slots and busy time slots can be used to estimate the success probability. Monitoring the channels can be done by the receiver (which can sense the spectrum and send this information to the transmitter). Another way is to monitor the null period by the transmitter as in cognitive radio systems. Any attempt to access channel $k$ by one user or more results in identifying channel $k$ as busy.}. Hence, for purposes of analysis in this section we assume that every user $n$ estimates $v_n(k,\sigma_{-n})$ perfectly (i.e., monitors the channels for a sufficient time). In Section \ref{sec:simulation}, simulation results demonstrate strong performance of the proposed algorithm in practical systems under estimation errors. Next, we examine a distributed algorithm that uses $u_n(k)$, $v_n(k,\sigma_{-n})$ to solve the distributed rate maximization problem.

\subsection{Best-Response Potential Game Formulation}
\label{ssec:BR_potential_game}

The system dynamics can be viewed as a non-cooperative game, in which every user sequentially updates its strategy to increase its rate given the current system state irrespective of other users' rates, referred to as the \emph{Distributed Rate Maximization (DRM) game}. The strategy $k_n^*$ that solves (\ref{eq:pre_non_cooperative_optimization}) represents a \emph{best-response (BR)} strategy since a user chooses $k_n^*$ that maximizes its rate given the current system state. On the other hand, switching from strategy $k_n$ to $k_n'$ to increase the rate (but not maximizing it) such that $R_n(k_n', P_n, \sigma_{-n})>R_n(k_n, P_n, \sigma_{-n})$ is called a \emph{better-response}. A system is in an equilibrium when users cannot increase their rates by unilaterally changing their strategy.
\begin{definition}
A Nash Equilibrium Point (NEP) for the DRM game is a strategy profile $\sigma^*=(\sigma_n^*,\sigma_{-n}^*)$, where $k_{n'}^*\in\mathcal{K}_M$, $p_{n'}^*=P_{n'}$ for all $n'\in\mathcal{N}$, such that
\beq
\label{eq:NEP}
\bea{l}
\displaystyle R_n\left(\sigma_n^*,\sigma_{-n}^*\right)\geq
\displaystyle R_n(\tilde{\sigma}_n,\sigma_{-n}^*)  \vspace{0.0cm}\\\hspace{2cm} \forall n \;, \forall \tilde{\sigma}_n=(\tilde{k}_n, P_n)\;,\; \tilde{k}_n\in\mathcal{K}_M
  \;.
\ena
\eeq
\end{definition}

A game has the \emph{finite improvement property (FIP)} if every \emph{improvement path}, in which a sequence of better-responses are executed by users sequentially, is finite. Clearly, a game with FIP converges to a NEP in finite time under any better-response dynamics. In what follows we use the theory of potential games to analyze the convergence of the BR dynamics to a NEP under the DRM game. In potential games, the incentive of users to switch strategies can be expressed by a global potential function. A NEP for the game is reached at any local maximum of the potential function. Next, we define a class of related potential games to the DRM game at hand.

\begin{definition}[{\cite{Voorneveld_2000_Best}}]
The DRM game is referred to as a \emph{best-response potential game} if there is a best-response potential function $\phi : \sigma\rightarrow \mathbb{R}$ such that for every user $n$ and for every $\sigma_{-n}=\left\{k_i, p_i\right\}_{i\neq n}$, where $k_i\in\mathcal{K}_M$, $p_i=P_i$, the following holds:
\beq
\label{eq:ordinal_potential_def}
\bea{l}
\displaystyle \arg\max_{k_n\in\mathcal{K}_M} R_n(k_n, P_n, \sigma_{-n})
=\arg\max_{k_n\in\mathcal{K}_M}\phi(k_n, P_n, \sigma_{-n}) \;.
\ena
\eeq
\end{definition}
Differing from other classes of potential games (e.g., exact, ordinal) which have the FIP, cycles may occur in BR potential games under some improvement paths. Nevertheless, no cycle occurs when playing BR dynamics since the potential function increases at any BR. In the DRM game, some improvement paths may result in cycles when $M>1$, as shown in Appendix \ref{app:example}. Nevertheless, the following theorem shows that the DRM game is a best-response potential game.
\textsl{\theorem\label{th:BR_potential_game}{
The DRM game is a best-response potential game, with the following best-response potential function:
\beq\label{eq:BR_potential_function}
\bea{l}
\phi(\sigma) = \displaystyle\sum_{n=1}^{N}\log\left(\frac{1}{1-P_n}\right)
\times\vspace{0.0cm}\\ \hspace{2cm}
\displaystyle\sum_{i=1}^{M}\left(\log u_{n}(k_{n,i})-\frac{I_n(k_{n,i},\sigma_{-n})}{2} \right)
\;.
\ena
\eeq
}}\vspace{0.0cm} \\
\begin{proof}
The proof is given in Appendix \ref{app:pr_ordinal}. \vspace{0.0cm}
\end{proof}
Note that a variation of (\ref{eq:BR_potential_function}) was shown to be an ordinal potential function for a game with affine utilities in \cite{Mavronicolas_Congestion_2007} (i.e., any improvement path reaches an equilibrium in finite time). Theorem \ref{th:BR_potential_game}, however, shows that a best-response dynamics under the DRM game reaches an equilibrium in finite time although cycles may occur under some improvement paths.

\begin{remark}
It should be noted that when the constraints on the attempt probabilities satisfies: $P_n(k)\in\left\{0, P_n\right\}$ for all $k, n$, each user selects channels among the set of channels, in which $P_n(k)=P_n>0$. Thus, it can be verified that Theorem \ref{th:BR_potential_game} holds under this more general case as well. This scenario captures the situation of a hierarchical model (as in cognitive radio networks). An example of such attempt probability constraints is depicted in Fig. \ref{fig:mask_constraint}, where user $1$ (high-priority) is allowed to transmit over white spaces and $2.4$GHz bands, while user $2$ (low-priority) is allowed to transmit over $2.4$GHz band only.
\end{remark}

\begin{figure}[htbp]
\centering \epsfig{file=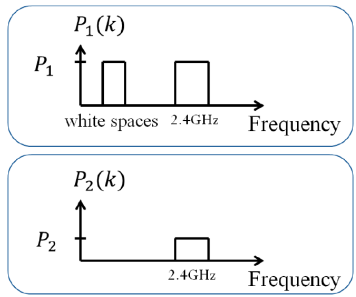,
width=0.35\textwidth}
\caption{An illustration of attempt probability constraints in a scenario of a hierarchical model in cognitive radio networks. User $1$ (high-priority) is allowed to transmit over white spaces and $2.4$GHz bands, while user $2$ (low-priority) is allowed to transmit over $2.4$GHz band only.}
\label{fig:mask_constraint}
\end{figure}

\subsection{Best-Response Algorithm for Distributed Rate Maximization}
\label{ssec:BR_DRM}

Following Theorem \ref{th:BR_potential_game}, we propose a non-cooperative BR algorithm to solve the constrained distributed rate maximization problem in the spatial multi-channel ALOHA networks, dubbed BR for Distributed Rate Maximization (BR-DRM) algorithm. We initialize the algorithm by a simple solution where every user picks the $M$ channels with the highest collision-free utility $u_n(k)$. In the learning process step, each user monitors the load on the channels to obtain $v_n(k,\sigma_{-n})$ for all $k$ (see the beginning of Section \ref{sec:rate_maximization} for more details on the monitoring process). Then, at each updating time the selected active users (selected according to the mechanism described in Section \ref{sec:network}) update their strategies by selecting the channels according to (\ref{eq:BR_DRM_sol3}). When users cannot increase their rates by unilaterally changing their strategy, an equilibrium is obtained. The BR-DRM Algorithm is given in Table \ref{tab:BR_DRM}. The set of active users in Step 3 is determined according to the distributed mechanism described in Sec. \ref{sec:network}. In Steps $5-7$ the user selects the channels for transmission based on the estimated load. Users repeat updating strategies until their rates converge. During the running time of the algorithm the loads on the channels are changed dynamically and affect user decisions across time. Convergence is guaranteed following Theorem \ref{th:BR_potential_game}, since the best response potential function is upper bounded (by $\phi(\sigma)\leq M\sum_{n=1}^{N}{\log\left(\frac{1}{1-P_n}\right)\max_{k}\log\left({u}_n(k)\right)}$) and any local maxima is a NEP for the game (since no user can increase its rate by unilaterally changing its strategy). It should be noted that convergence in finite time of BR dynamics in the DRM game is preserved as long as all active users are not neighbors (since the log-interference that user $n$ experiences $I_n(k,\sigma_{-n})$ is affected only by users in $\mathcal{I}_n$, thus we assume that no simultaneous updates occur among neighbors) as designed by the mechanism that selects the active users described in Section \ref{sec:network}.
\begin{corollary}
Assume that users update their strategy according to the mechanism described in Section \ref{sec:network}. Then, the BR-DRM algorithm, given in Table \ref{tab:BR_DRM}, converges to a NEP in finite time. \vspace{0.0cm}
\end{corollary}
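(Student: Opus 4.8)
The plan is to combine the best-response potential structure of Theorem~\ref{th:BR_potential_game} with the finiteness of the strategy space and the non-neighbor property enforced by the coordination mechanism of Section~\ref{sec:network}. The first step is to reduce the simultaneous multi-user update performed at a single updating time to a composition of single-user best-response updates. Each active user's selection in~(\ref{eq:BR_DRM_sol3}) is computed from $v_n(k,\sigma_{-n})$, which through~(\ref{eq:log_interference}) depends only on the strategies of its neighbors $\mathcal{I}_n$. Since the mechanism guarantees that no two active users are neighbors, updating any one active user cannot alter the neighbor configuration seen by any other active user; hence the active users may be processed one at a time, in any order, without changing the best responses any of them selects, and the resulting profile coincides with the simultaneous-update profile. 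I would make this reordering explicit before invoking the potential argument.

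Second, for each single-user update I would apply the best-response property~(\ref{eq:ordinal_potential_def}) from Theorem~\ref{th:BR_potential_game}: the channel set chosen by~(\ref{eq:BR_DRM_sol3}) also maximizes $\phi(\cdot,P_n,\sigma_{-n})$ over $\mathcal{K}_M$. Consequently $\phi$ in~(\ref{eq:BR_potential_function}) cannot decrease at that step, and strictly increases whenever the user's previous strategy was not already a maximizer. Summing these inequalities over the active users processed at a given updating time (using the reordering above) shows that $\phi$ is non-decreasing across updating times and strictly increasing whenever at least one active user actually switches channels.

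Third, I would use finiteness to force termination. Since $p_n$ is fixed at $P_n$ and each $k_n$ ranges over the finite set $\mathcal{K}_M$, the joint strategy profile lives in a finite set, so $\phi$ takes only finitely many values, and it is bounded above by $M\sum_{n=1}^{N}\log\!\left(\frac{1}{1-P_n}\right)\max_{k}\log u_n(k)$. A quantity that strictly increases at every genuine change and takes finitely many values can change only finitely often, so along every realization of the backoff draws the profile becomes constant, say equal to $\sigma^*$, after finitely many updating times. Finally I would identify $\sigma^*$ as a NEP in the sense of the definition preceding Theorem~\ref{th:BR_potential_game}: if some user $n$ could strictly raise $R_n$ by a unilateral deviation, then $k_n^*$ would not be a best response to $\sigma_{-n}^*$, so the next time $n$ is selected as an active user---which, under the independent continuous backoff draws of Section~\ref{sec:network}, happens eventually with probability one---it would switch channels and strictly increase $\phi$, contradicting constancy of the profile after stabilization. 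Hence no user can improve unilaterally and $\sigma^*$ is a NEP.

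I expect the main obstacle to be the first step. The guarantee in Theorem~\ref{th:BR_potential_game} is stated for a single deviating user, whereas the algorithm permits several users to update at once, so the crux is to justify rigorously that the non-neighbor condition makes a simultaneous update equivalent to a sequential composition of single-user best responses; the remaining monotonicity, finiteness, and equilibrium-identification steps are then routine.
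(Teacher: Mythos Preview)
Your proposal is correct and follows essentially the same approach as the paper. The paper treats this corollary as an immediate consequence of the discussion preceding it: it invokes Theorem~\ref{th:BR_potential_game} together with the upper bound on $\phi$, notes that any local maximum is a NEP, and observes that simultaneous updates among non-neighbors are harmless because $I_n(k,\sigma_{-n})$ depends only on $\mathcal{I}_n$. Your write-up is a more explicit and rigorous unpacking of exactly these ingredients---in particular, you spell out the reduction of a simultaneous non-neighbor update to a sequence of single-user best responses and the finiteness-of-values argument for termination, which the paper leaves implicit.
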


Next, we examine the case where simultaneous updates across neighbors may occur. An example for this case is when a simpler mechanism is applied where each user (say $n$) updates its strategy with a given probability $0<q_n<1$, referred to as a \emph{probabilistic mechanism}. Another example is when communication errors between neighbors result in simultaneous updates. In such cases, convergence to a NEP is achieved with high probability as time increases. \vspace{0.0cm}

\begin{proposition}
\label{prop:probabilistic}
Assume that users update their strategy according to the probabilistic mechanism. Then, the BR-DRM algorithm converges to a NEP with probability $1$ as time approaches infinity. \vspace{0.0cm}
\end{proposition}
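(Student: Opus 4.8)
The plan is to model the algorithm's dynamics as a Markov chain on the finite set of channel-selection profiles and to show that this chain is absorbed into the set of NEPs with probability $1$. Since the attempt probabilities are frozen at $p_n=P_n$, a state is just the tuple $(k_1,\dots,k_N)$ with each $k_n\in\mathcal{K}_M$, so the state space is finite. Under the probabilistic mechanism, at each updating time an independent subset of users becomes active (user $n$ with probability $q_n$), and each active user replaces its strategy by the best response (\ref{eq:BR_DRM_sol3}) computed against the common pre-update profile; this makes the process a time-homogeneous Markov chain. The key difficulty is that, in contrast to the sequential case of the Corollary, when two neighbors update simultaneously the potential $\phi$ of (\ref{eq:BR_potential_function}) need not increase --- indeed cycles of the kind exhibited in Appendix \ref{app:example} can arise --- so the monotone-potential argument is unavailable and must be replaced by a reachability argument.

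First I would establish that every NEP is an absorbing state. At a NEP $\sigma^*$, the defining inequality (\ref{eq:NEP}) says that $k_n^*$ is already a best response to $\sigma_{-n}^*$ for every $n$. Because all active users best-respond to the \emph{same} pre-update profile (the load they monitor reflects the state inherited from the previous updating time), any active user recomputes its own current strategy and therefore does not move. Hence, whatever random subset of users happens to be active, the profile remains $\sigma^*$, so $\sigma^*$ is absorbing. Making the information timing precise at this step is one of the delicate points of the argument.

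Next I would prove a uniform reachability bound: there exist $\epsilon>0$ and an integer $L$ such that from every state the chain reaches a NEP within $L$ updating times with probability at least $\epsilon$. Fix a state $\sigma$. By Theorem \ref{th:BR_potential_game} the game is a best-response potential game, so a single-user best-response update that changes a user's strategy strictly increases $\phi$ (uniqueness of the maximizer is guaranteed by the continuity assumption on the rates). Since $\phi$ is bounded above and takes finitely many values on the finite state space, iterating single-user best responses --- each time selecting a user that is not yet optimal --- must terminate at a NEP after finitely many steps. Let $L_\sigma$ denote the length of such a single-user path from $\sigma$, and let $L=\max_\sigma L_\sigma<\infty$. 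In the probabilistic mechanism the event that \emph{exactly} one prescribed user $n$ is active has probability $q_n\prod_{m\neq n}(1-q_m)\geq p_{\min}>0$, and when it occurs the transition coincides with the sequential single-user best response. Realizing the whole length-$L_\sigma$ path therefore has probability at least $p_{\min}^{L_\sigma}\geq p_{\min}^{L}=:\epsilon>0$, uniformly in $\sigma$.

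Finally I would conclude by partitioning the updating times into consecutive blocks of length $L$. By the Markov property and the uniform bound $\epsilon$, the probability that no NEP is reached during the first $k$ blocks is at most $(1-\epsilon)^{k}$, which tends to $0$ as $k\to\infty$; combined with the absorbing property this shows that the chain enters the NEP set and remains there, i.e. the BR-DRM algorithm converges to a NEP with probability $1$. I expect the main obstacle to be the handling of simultaneous neighbor updates: the proof must avoid relying on $\phi$ increasing at every step, and the two delicate ingredients are (i) verifying that a NEP is genuinely absorbing once the timing of the monitored information is pinned down, and (ii) extracting positive-probability single-user update events so as to import the sequential convergence guaranteed by Theorem \ref{th:BR_potential_game}.
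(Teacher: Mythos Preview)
Your proposal is correct and follows essentially the same route as the paper: both arguments embed a sequential single-user best-response path (whose finiteness comes from Theorem~\ref{th:BR_potential_game}) into the probabilistic mechanism via the positive-probability event that exactly the prescribed user updates, and then run a block/geometric argument to force absorption into the NEP set. Your write-up is in fact slightly more careful than the paper's, since you explicitly verify that NEPs are absorbing and frame the process as a Markov chain; the only cosmetic point is that your bound $q_n\prod_{m\neq n}(1-q_m)\geq p_{\min}$ should be stated as $\geq \min_j q_j\prod_{m\neq j}(1-q_m)$ (the paper writes this as $p_{\min}(1-p_{\max})^{N-1}$), but the intent is clear.
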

\begin{proof}
Let $p_{min}=\min_{n}q_n, p_{max}=\max_{n}q_n$. Since the DRM game is a potential game, any NEP can be reached in finite time when users update their strategy in a sequential manner (i.e., when no simultaneous updates occur) starting from any point. Thus, there exists a finite integer $U$ which is the maximal number of updates needed to reach any NEP from any starting point.

Next, consider the updating times $t_{\ell\cdot U-U+1}, t_{\ell\cdot U-U+2}, ..., t_{\ell\cdot U}$ for $\ell=1, 2, ...$. Note that given any strategy profile $\sigma^{(t_{\ell\cdot U-U})}$ by time $t_{\ell\cdot U-U}$, there exists a sequence of sequential strategy updates across users during the updating times $t_{\ell\cdot U-U+1}, t_{\ell\cdot U-U+2}, ..., t_{\ell\cdot U}$ such that the system surely reaches an equilibrium by time $t_{\ell\cdot U}$. Since the probability for each such update is greater than $p_{min}\left(1-p_{max}\right)^{N-1}$ (i.e., the desired user updates its strategy and all other $N-1$ users' strategies are remain fixed), the probability to reach an equilibrium at time $t_{\ell\cdot U}$ starting at time $t_{\ell\cdot U-U+1}$ is greater than $\left[p_{min}\left(1-p_{max}\right)^{N-1}\right]^U$ for all $\ell=1, 2, ...$\;. Similarly, the probability that the system does not reach a NEP at time $t_{\ell\cdot U}$ starting at time $t_{\ell\cdot U-U+1}$ is less than $1-\left[p_{min}\left(1-p_{max}\right)^{N-1}\right]^U$. Since this bound is independent of the starting point, the probability that the system does not reach a NEP at time $t_{\ell\cdot U}$ is less than $\left[1-\left[p_{min}\left(1-p_{max}\right)^{N-1}\right]^U\right]^{\ell}$. Thus, letting $\ell\rightarrow\infty$ completes the proof.
\end{proof}

\begin{table}
\caption{BR-DRM Algorithm}
\centering
\normalsize\begin{tabular}{|l|}
\hline
\vspace{-0.3cm}\\
\hspace{0.3cm} 1) \hspace{0.3cm} \textbf{Initialize}\\
\hspace{0.3cm}    \hspace{0.7cm} each user (say $n$) estimates $u_n(k)$ for all $k$, and
 \\ \hspace{1.1cm} selects the $M$ channels with the highest $u_n(k)$ \\\\
\hspace{0.3cm} 2) \hspace{0.4cm} \textbf{repeat} (at each updating time)\hspace{0.5cm} \\
\hspace{0.3cm} 3) \hspace{1.1cm} $\mathcal{N}_a\leftarrow$ updated set of active users \\
\hspace{0.3cm} 4) \hspace{1.1cm} \textbf{for} $n\in\mathcal{N}_a$ \textbf{do}\\
\hspace{0.3cm} 5) \hspace{1.7cm} estimate $v_n(k,\sigma_{-n})$ for all $k$ \\
\hspace{0.3cm} 6) \hspace{1.7cm} $\displaystyle k_n^*\leftarrow\left\{k_{n,1}^*, k_{n,2}^*, ..., k_{n,M}^*\right\}$ by (\ref{eq:BR_DRM_sol3})  \\
\hspace{0.3cm} 7) \hspace{1.7cm} $(k_n,p_n)\leftarrow (k_n^*,P_n)$ \\
\hspace{0.3cm} 8) \hspace{1.1cm} \textbf{end for}\\
\hspace{0.3cm} 10) \hspace{0.2cm} \textbf{until} all rates converge\\
\hline
		\end{tabular}
	\label{tab:BR_DRM}
\end{table}

\subsection{Efficiency of the BR-DRM Algorithm}
\label{ssec:efficiency}

The convergence analysis provided in Section \ref{ssec:BR_potential_game} implies that the BR-DRM algorithm converges to a stable channel allocation. However, this stable operating point may not be a system-wide optimal solution. Though simulation results demonstrate good performance of the algorithm in terms of achievable user rate, in this section we provide theoretical performance guarantee of the performance that can be expected by implementing BR-DRM. We examine the performance gain under BR-DRM (i.e., when users apply distributed learning of the dynamic load to update their strategies) as compared to a naive algorithm, in which every user chooses a channel randomly and does not apply the learning process to update its strategy. For purposes of analysis, we consider the case where the network forms a $|\emph{I}|$-regular graph, and every user experiences equal rates for all channels (when channels are free), i.e., $u_n=u_n(k)=u_n(k')$ for all $k, k'$. We set $P_n=K/\left(|\emph{I}|+1\right)$ for all $n$ (which captures proportional fairness among users as will be discussed in subsequent sections). We focus on the more interesting case where $|\emph{I}|+1>K$ (thus, best response is used to mitigate interference among neighbors) and for the ease of presentation assume that $\left(|\emph{I}|+1\right)/K \in \mathbb{Z}$. \vspace{0.0cm}
\begin{theorem}
\label{th:gain}
Assume that the assumptions presented in this section hold. Let $R_n^{BR-DRM}$, $R_n^{Naive}$ be the rate of user $n$ achieved by the BR-DRM and naive algorithms, respectively. Then, the ratio between the user rate achieved by the BR-DRM algorithm and the user rate achieved by the naive algorithm is given by:
\beq
\displaystyle\frac{R_n^{BR-DRM}}{R_n^{Naive}}\geq\eta\triangleq \frac{\left(1-\frac{K}{|\emph{I}|+1}\right)^{\frac{|\emph{I}|+1}{K}-1}}
{\left(1-\frac{1}{|\emph{I}|+1}\right)^{|\emph{I}|}}. \vspace{0.0cm}
\eeq
\end{theorem}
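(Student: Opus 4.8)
The plan is to exploit the special structure of this regime: each user transmits on a single channel ($M=1$), all channels of a given user carry the same collision-free utility $u_n$, and every attempt probability equals $P \triangleq K/(|\emph{I}|+1)$. Writing $d_n(k)$ for the number of neighbors of $n$ transmitting on channel $k$, equation (\ref{eq:v}) with equal attempt probabilities gives $v_n(k,\sigma_{-n})=(1-P)^{d_n(k)}$, so by (\ref{eq:R2}) the rate of user $n$ on its chosen channel $k_n$ collapses to $R_n = P\,u_n\,(1-P)^{d_n(k_n)}$. Both rates in the target ratio therefore share the prefactor $P\,u_n$, and the whole problem reduces to controlling the exponent $d_n(k_n)$ realized under each scheme.

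First I would evaluate the naive rate. Under the naive scheme each of the $|\emph{I}|$ neighbors selects its channel uniformly and independently among the $K$ channels, so conditioned on $k_n$ each neighbor collides with $n$ with probability $1/K$. Taking the expectation and using independence across neighbors (all with attempt probability $P$) gives
\[
R_n^{Naive} = P\,u_n\left(1-\frac{P}{K}\right)^{|\emph{I}|} = P\,u_n\left(1-\frac{1}{|\emph{I}|+1}\right)^{|\emph{I}|},
\]
which is $P\,u_n$ times the denominator of $\eta$.

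Next I would lower-bound the equilibrium rate of BR-DRM. Since the utilities $u_n(k)$ are equal across channels, the best response (\ref{eq:BR_DRM_sol3}) reduces, via (\ref{eq:log_interference}) with equal $P_i=P$, to choosing a channel carrying the fewest neighbors (ties broken arbitrarily). At a NEP no unilateral switch can increase $R_n$, and since $R_n$ is strictly decreasing in the load $d_n(k_n)$, this forces $d_n(k_n)=\min_k d_n(k)$. The $|\emph{I}|$ neighbors occupy the $K$ channels, so $\sum_{k=1}^K d_n(k)=|\emph{I}|$ and hence $d_n(k_n)\le\lfloor |\emph{I}|/K\rfloor$. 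Writing $m=(|\emph{I}|+1)/K$, the integrality assumption gives $|\emph{I}|=mK-1$ and therefore $\lfloor |\emph{I}|/K\rfloor = m-1 = (|\emph{I}|+1)/K-1$. Because $|\emph{I}|+1>K$ makes the base $1-P=1-K/(|\emph{I}|+1)$ lie in $(0,1)$, the map $x\mapsto(1-P)^x$ is decreasing, so
\[
R_n^{BR-DRM} = P\,u_n\,(1-P)^{d_n(k_n)} \ge P\,u_n\left(1-\frac{K}{|\emph{I}|+1}\right)^{\frac{|\emph{I}|+1}{K}-1},
\]
which is $P\,u_n$ times the numerator of $\eta$.

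Dividing the two displays cancels $P\,u_n$ and yields $R_n^{BR-DRM}/R_n^{Naive}\ge\eta$ for every user $n$, and crucially the bound holds at any NEP the algorithm may reach, so it does not depend on the choice of equilibrium. I expect the only delicate point to be the equilibrium step: one must justify that a best responder always sits on a least-loaded channel in its closed neighborhood (so that $d_n(k_n)=\min_k d_n(k)$), and then observe that the integrality assumption $(|\emph{I}|+1)/K\in\mathbb{Z}$ is exactly what makes the averaging bound $\lfloor |\emph{I}|/K\rfloor$ collapse to the clean exponent $(|\emph{I}|+1)/K-1$. The naive expectation and the monotonicity argument are routine by comparison.
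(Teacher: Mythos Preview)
Your proposal is correct and follows essentially the same route as the paper's proof: both compute the naive expected rate via the per-neighbor interference probability $P/K=1/(|\emph{I}|+1)$, and both lower-bound the BR-DRM rate by arguing that at equilibrium each user sits on a least-loaded channel so that $d_n(k_n)\le |\emph{I}|/K$, then invoke integrality to obtain the exponent $(|\emph{I}|+1)/K-1$. Your handling of the floor and the integrality assumption is in fact slightly more explicit than the paper's contradiction argument, but the substance is identical.
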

\begin{proof}
To prove the theorem we first lower bound the achievable expected rate of user $n$ under BR-DRM. Let $\emph{I}_n(k)$ be the set of user $n$'s neighbors who select channel $k$. Assume to the contrary that BR-DRM has converged and that user $n$ selects channel $k_1$ and $|\emph{I}_n(k_1)|>|\emph{I}|/K$. Since there exists a channel $k_2$ with $|\emph{I}_n(k_2)|\leq |\emph{I}|/K$ (and thus higher rate can be achieved over channel $k_2$), playing best response by user $n$ cannot be terminated by selecting channel $k_1$ which contradicts the assumption. Since this argument holds for every user in the system, and the BR-DRM converges (in a finite time) by Theorem \ref{th:BR_potential_game}, then $\emph{I}_n(k_n^*)\leq |\emph{I}|/K$ for all $n$ in equilibrium, where $k_n^*$ is the selected channel by user $n$ at equilibrium. As a result, the achievable rate of user $n$ is given by:
\beq
R_n^{BR-DRM}\geq\displaystyle u_n\frac{K}{|\emph{I}|+1}\cdot \left(1-\frac{K}{|\emph{I}|+1}\right)^{\frac{|\emph{I}|+1}{K}-1} \;\;\forall n.
\eeq

Next, we compute the expected user rate achieved by the naive algorithm where every user chooses a channel randomly without using CSI. Assume that user $n$ transmits over channel $k$. Note that channel $k$ is selected by all other users with a probability $1/K$ and then every user that picks channel $k$ actually transmits over it with a probability $K/\left(|\emph{I}|+1\right)$. Therefore, the expected rate of user $n$ on channel $k$ is: $R_n(k)=u_n\frac{K}{|\emph{I}|+1}\left(1-\frac{1}{K}\cdot\frac{K}{|\emph{I}|+1}\right)^{|\emph{I}|}$. Since every channel is selected with equal probability $1/K$, the expected rate of user $n$ achieved by the naive algorithm is given by:
\beq
R_n^{Naive}=\displaystyle u_n\frac{K}{|\emph{I}|+1}\left(1-\frac{1}{K}\cdot\frac{K}{|\emph{I}|+1}\right)^{|\emph{I}|} \;.
\eeq
Hence, the ratio between the expected user rate achieved by the BR-DRM algorithm and the expected user rate achieved by the naive algorithm is given by:
\beq
\displaystyle\frac{R_n^{BR-DRM}}{R_n^{Naive}}\geq\eta\triangleq
\frac{\left(1-\frac{K}{|\emph{I}|+1}\right)^{\frac{|\emph{I}|+1}{K}-1}}
{\left(1-\frac{1}{|\emph{I}|+1}\right)^{|\emph{I}|}} \;\;\;\forall n.
\eeq
\end{proof}
\begin{remark}
Note that $\lim_{\frac{|\emph{I}|+1}{K}\rightarrow 1}\left(1-\frac{K}{|\emph{I}|+1}\right)^{\frac{|\emph{I}|+1}{K}-1}=1$ and that both numerator and denominator of $\eta$ approach $e^{-1}$ as $|\emph{I}|$ increases and $K$ is fixed. Thus, it can be verified that $\eta$ is bounded by $\displaystyle 1\leq\eta\leq e$, where $\eta$ approaches $1$ as $|\emph{I}|$ approaches infinity and $K$ is fixed, and $\eta$ approaches $e$ when $|\emph{I}|+1=K$ and $K$ approaches infinity. Thus, Theorem \ref{th:gain} provides an insight about the performance gain that can be expected by the BR-DRM algorithm. Specifically, under the system model considered in this section, implementing BR-DRM guarantees that \emph{every user in the system improves its performance} (in terms of expected rate) as compared to the naive algorithm. Significant performance gain is obtained when $|\emph{I}|+1=K$ (i.e., in situations of a low collision level). In this case, we have $\eta=\left(1-1/K\right)^{1-K}$. Thus, the user rate increases by more than $100\%$ for $K\geq 2$ (since the performance gain is greater than $\eta=2$) and more than $170\%$ for very large $K$ (since the performance gain is greater than $\eta\approx e$) by implementing BR-DRM as compared to the naive algorithm.
\end{remark}

\section{Achieving Global Proportional Fairness:\\ A Cooperative Setting}
\label{sec:fairness}

Instead of solving a distributed rate maximization as done in the preceding section, here we are interested in developing a distributed algorithm that attains proportionally fair rates in the network (using information sharing between neighbors only). Cooperation in this section refers to a social behavior (by designing a social utility function for each user) that can lead to a globally-optimal operating point. Nevertheless, the model is still cast as a non-cooperative game in the sense that users act with respect to their own social utility. We consider the case where $M=1$. Thus, $k_n\in\mathcal{K}$ is a natural number denoting a single channel chosen by user $n$. Formally, the problem is to find a strategy profile that maximizes the sum-log rate in the network:
\beq
\label{eq:cooperative_optimization}
\bea{l}
\displaystyle \sigma^*=\arg\;\max_{\left\{k_n\in\mathcal{K}, 0\leq p_n\leq 1\right\}_{n=1}^{N}}\hspace{0.3cm} \sum_{n=1}^{N} \log R_n\left(\sigma\right) \;.
\ena
\eeq
The above optimization problem (\ref{eq:cooperative_optimization}) was first formulated in \cite{Kar_2004_Achieving} under a variation of the ALOHA model considered in this paper for single-channel systems (i.e., $K=1$) and equal rates for all links. In consistence with the previous section, it is convenient to view each user in the network as a player that takes actions with respect to a local utility when solving a discrete optimization problem, as suggested in \cite{Marden_2012_Revisiting}. In what follows we address this problem (\ref{eq:cooperative_optimization}) from a game theoretic perspective under the multi-channel case.

\subsection{Exact Potential Game Formulation}
\label{ssec:cooperative}

In Section \ref{sec:rate_maximization} we have shown that any NEP of the DRM game is a local maximum of its potential function (\ref{eq:BR_potential_function}). In this section, however, we are interested in finding a \emph{global maximum} of (\ref{eq:cooperative_optimization}) since it attains a global proportional fairness in the network.

Let $\mathcal{I}_n(k)$ be the set of user $n$'s neighbors that transmit over channel $k$, and let
\beq
\bea{l}
\label{eq:F}
\displaystyle F_n(k_n, p_n,\sigma_{-n})\vspace{0.0cm}\\
\displaystyle\triangleq\log\left(u_n(k_n)p_n\right)-I_n(k_n,\sigma_{-n})
-\log\left(\frac{1}{1-p_n}\right)\left|\mathcal{I}_n(k_n)\right|,
\ena
\eeq
be the \emph{cooperative utility} (or fair utility) for user $n$. Note that the cooperative utility balances between individual and social utilities. The term $\log\left(u_n(k_n)p_n\right)-I_n(k_n,\sigma_{-n})$ is the individual utility for user $n$, where $\log\left(\frac{1}{1-p_n}\right)|\mathcal{I}_n(k)|$ represents the aggregated log-interference that user $n$ causes to its neighbors. Throughout this section it is assumed that user $n$ can compute its cooperative utility when making decisions (see a discussion on a practical implementation in section \ref{ssec:NBRF}). We refer to this game as the \emph{fairness game}.

Next, we show that the fairness game is an exact potential game where $\sum_n\log R_n\left(\sigma\right)$ is a potential function of the game.
\begin{definition}[{\cite{Monderer_Potential_1996}}]
The fairness game is called an \emph{exact} potential game if there is an exact potential function $\phi : \sigma\rightarrow \mathbb{R}$ such that for every user $n$ and for every $\sigma_{-n}=\left\{k_i, p_i\right\}_{i\neq n}$, where $k_i\in\mathcal{K}$, $0\leq p_i\leq 1$, the following holds:
\beq
\label{eq:exact_potential_def}
\bea{l}
\vspace{0.0cm}\displaystyle F_n(\sigma_n^{(2)}, \sigma_{-n})-\displaystyle F_n(\sigma_n^{(1)}, \sigma_{-n}) \\ \vspace{0.0cm} \hspace{2cm}
\displaystyle=\phi(\sigma_n^{(2)}, \sigma_{-n})-\displaystyle \phi(\sigma_n^{(1)}, \sigma_{-n})
\;,\vspace{0.0cm} \\ 
\forall \sigma_n^{(1)}=(k_n^{(1)}, p_n^{(1)}), \sigma_n^{(2)}=(k_n^{(2)}, p_n^{(2)}) \;,
\vspace{0.0cm} \\ \hspace{1cm}
 k_n^{(1)}, k_n^{(2)}\in\mathcal{K} \;, 0\leq p_n^{(1)}, p_n^{(2)}\leq 1 \; .
\ena
\eeq
\end{definition}

\textsl{\theorem\label{th:exact}{
The fairness game is an exact potential game, with the following exact potential function:
\beq\label{eq:exact_potential_function}
\bea{l}
\phi(\sigma) = \displaystyle\sum_{n=1}^{N}\log R_n\left(\sigma\right)
\;.
\ena
\eeq
}}\vspace{0.0cm} \\
\begin{proof}
The proof is given in Appendix \ref{app:pr_exact}.
\end{proof}

\subsection{Nash Equilibrium of the fairness game}
\label{ssec:NEP_cooperative}

Since the fairness game is an exact potential game with an upper bounded potential function (by $\phi(\sigma)<\sum_{n=1}^{N}{\max_{k}\log\left({u}_n(k)\right)}$), any BR dynamics converges to a NEP in the sense that users cannot increase their cooperative utility by unilaterally changing their strategies. However, any local maximum of the potential function (\ref{eq:exact_potential_function}) is a NEP of the game. Thus, here we first characterize the NEPs' structure of the fairness game. In Section \ref{ssec:NBRF} we will use this result to develop an algorithm that achieves the best NEP in the sense that the global maximum of (\ref{eq:exact_potential_function}) is attained.
\vspace{0.0cm}
\begin{definition}
A Nash Equilibrium Point (NEP) for the fairness game is a strategy profile $\sigma^*=(\sigma_n^*,\sigma_{-n}^*)$, where $k_i^*\in\mathcal{K}$, $0\leq p_i^*\leq 1$ for all $i\in\mathcal{N}$, such that
\beq
\label{eq:NEP2}
\bea{l}
\displaystyle F_n\left(\sigma_n^*,\sigma_{-n}^*\right)\geq
\displaystyle F_n(\tilde{\sigma}_n,\sigma_{-n}^*)  \vspace{0.0cm}\\\hspace{1cm} \forall n \;, \forall \tilde{\sigma}_n=(\tilde{k}_n, \tilde{p}_n)\;,\; \tilde{k}_n\in\mathcal{K}\;,\;0\leq \tilde{p}_n\leq 1
  \;. \vspace{0.0cm}
\ena
\eeq
\end{definition}
\textsl{\theorem\label{th:NEP_cooperative1}{
A strategy profile $\sigma^*=\left\{k_n^*, p_n^*\right\}_{n=1}^{N}$ is a \vspace{0.0cm} NEP for the fairness game if $k_n^*\in\mathcal{K}$, $\displaystyle p_n^*=\frac{1}{\left|\mathcal{I}_n(k_n^*)\right|+1}$ for all $n\in\mathcal{N}$.
}}\vspace{0.0cm} \\
\begin{proof}
Fix a strategy profile $\sigma_{-n}$ and assume that user $n$ updates its strategy. To prove the theorem it suffices to show that for all $n$ and any $\sigma_{-n}$ the following holds:
\begin{center}
$
\bea{l}
\displaystyle\sigma_n^*=\left(k_n^*,p_n^*=\frac{1}{\left|\mathcal{I}_n(k_n^*)\right|+1}\right)
\vspace{0.0cm} \\ \hspace{0.5cm}
\displaystyle=\arg\max_{k_n\in\mathcal{K}, 0\leq p_n\leq 1} F_n(k_n, p_n, \sigma_{-n})
\vspace{0.0cm} \\ \hspace{0.5cm}
\displaystyle
=\arg\max_{k_n\in\mathcal{K}, 0\leq p_n\leq 1}\left[\log u_n(k_n)+\log p_n
\vspace{0.0cm} \right.\\\left. \hspace{2cm}
\displaystyle-I_n(k_n,\sigma_{-n})-\log\left(\frac{1}{1-p_n}\right)|\mathcal{I}_n(k_n)|
\right]\;.
\ena
$
\end{center}
Note that for any $k_n\in\mathcal{K}$ the terms $\log u_n(k_n), I_n(k_n,\sigma_{-n})$ are independent of $p_n$. Thus, it suffices to show that for any given $k_n$ the following holds:
\begin{center}
$
\bea{l}
\displaystyle \frac{1}{\left|\mathcal{I}_n(k_n)\right|+1}
\displaystyle=\arg\max_{0\leq p_n\leq 1} \log p_n
+\log\left(1-p_n\right)|\mathcal{I}_n(k_n)|
\;.
\ena
$
\end{center}
The case where $|\mathcal{I}_n(k_n)|=0$ is straightforward since user $n$ does not interfere with other users (setting $p_n=1$ maximizes the RHS by defining $0\cdot\log 0=0$). Thus, we consider the case where $|\mathcal{I}_n(k_n)|\geq 1$. Note that the function $\log p_n
+\log\left(1-p_n\right)|\mathcal{I}_n(k_n)|$ is strictly concave function of $p_n$ (for $0\leq p_n\leq 1$). Therefore, it has a unique global maximum. differentiating with respect to $p_n$ and equating to zero yields $p_n^*=\frac{1}{\left|\mathcal{I}_n(k_n)\right|+1}$ which completes the proof. \vspace{0.0cm}
\end{proof}

\begin{corollary}
\label{cor:NEP1}
A local maximum of (\ref{eq:exact_potential_function}) is attained only if every user $n$ is associated with an attempt probability $\displaystyle p_n=\frac{1}{\left|\mathcal{I}_n(k_n)\right|+1}$. In particular, the strategy profile that attains proportionally fair rates (i.e., the solution to (\ref{eq:cooperative_optimization})) must satisfy $\displaystyle p_n=\frac{1}{\left|\mathcal{I}_n(k_n)\right|+1}$ for all $n$. \vspace{0.0cm}
\end{corollary}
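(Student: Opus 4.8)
The plan is to deduce the corollary from the exact potential structure established in Theorem~\ref{th:exact} together with the strict concavity computation already carried out in the proof of Theorem~\ref{th:NEP_cooperative1}. The essential observation is that, because $\phi(\sigma)=\sum_{n}\log R_n(\sigma)$ is an \emph{exact} potential, a local maximum of $\phi$ cannot be improved by any single user's unilateral deviation in its cooperative utility $F_n$; in particular it cannot be improved by that user changing only its attempt probability while keeping its channel fixed. Since the $p_n$-dependent part of $F_n$ has a unique maximizer, this pins down $p_n$ exactly.

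First I would fix a local maximizer $\sigma=\left\{k_n, p_n\right\}_{n=1}^{N}$ of $\phi$, select an arbitrary user $n$, freeze $\sigma_{-n}$ and the channel $k_n$, and view $\phi$ as a function of the single variable $p_n\in[0,1]$. By the exact potential identity (\ref{eq:exact_potential_def}), applied to two strategies of user $n$ that differ only in $p_n$ (and keep the same $k_n$), we have $\phi(\sigma_n^{(2)},\sigma_{-n})-\phi(\sigma_n^{(1)},\sigma_{-n})=F_n(\sigma_n^{(2)},\sigma_{-n})-F_n(\sigma_n^{(1)},\sigma_{-n})$. Hence along this one-dimensional slice $\phi$ and $F_n$ differ only by an additive constant and share the same set of maximizers.

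Next I would invoke the computation from the proof of Theorem~\ref{th:NEP_cooperative1}: with $k_n$ fixed, $F_n$ reduces, up to $p_n$-independent terms, to $\log p_n+|\mathcal{I}_n(k_n)|\log(1-p_n)$, which is strictly concave in $p_n$ and therefore has the unique maximizer $p_n^{*}=\frac{1}{|\mathcal{I}_n(k_n)|+1}$ (the degenerate case $|\mathcal{I}_n(k_n)|=0$ giving $p_n^{*}=1$). Because $\sigma$ is a local maximum of $\phi$, its restriction to the $p_n$-slice attains a local maximum at the current value of $p_n$, and strict concavity forces this local maximum to coincide with the unique global one. Therefore $p_n=\frac{1}{|\mathcal{I}_n(k_n)|+1}$, and since $n$ was arbitrary the condition holds for every user. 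The second assertion then follows immediately: by Theorem~\ref{th:exact} the proportional-fairness solution of (\ref{eq:cooperative_optimization}) is the global maximizer of $\phi=\sum_n\log R_n$, hence in particular a local maximizer, so it inherits the same attempt-probability structure.

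I expect the only delicate point to be the bookkeeping that a local maximum of the mixed discrete-continuous function $\phi$ restricts to a local maximum (and, by concavity, the global maximum) along each continuous coordinate $p_n$, together with a clean treatment of the boundary case $|\mathcal{I}_n(k_n)|=0$ so that the formula $p_n=\frac{1}{|\mathcal{I}_n(k_n)|+1}$ holds uniformly. Both are routine once the exact-potential identity of Theorem~\ref{th:exact} and the strict concavity from Theorem~\ref{th:NEP_cooperative1} are in place, so I do not anticipate a genuine obstacle beyond careful exposition.
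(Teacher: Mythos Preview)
Your proposal is correct and matches the paper's own justification. The paper does not give a standalone proof of Corollary~\ref{cor:NEP1}; it simply states that it ``follows directly from the NEPs' structure characterized in Theorem~\ref{th:NEP_cooperative1},'' which is precisely the route you spell out: combine the exact-potential identity of Theorem~\ref{th:exact} (so a local maximum of $\phi$ forces each user's $p_n$ to maximize $F_n$ along the $p_n$-slice) with the strict-concavity computation in the proof of Theorem~\ref{th:NEP_cooperative1} that pins this maximizer at $p_n=\tfrac{1}{|\mathcal{I}_n(k_n)|+1}$.
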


\begin{theorem}
\label{th:NEP_cooperative2}
Let $\left\{k_n^*\right\}_{n=1}^{N}$ be a given channel allocation for all users. A strategy profile
\beq
\sigma^*=\left\{k_n^*, p_n^*=\frac{1}{\left|\mathcal{I}_n(k_n^*)\right|+1}\right\}_{n=1}^{N}
\eeq
is the unique solution to the following optimization problem:
\beq
\label{eq:NEP_cooperative_p}
\bea{l}
\displaystyle
\left\{p_n^*\right\}_{n=1}^{N}
=\arg\max_{\left\{0\leq p_n\leq 1\right\}_{n=1}^{N}}
\sum_{n\in\mathcal{N}:k_n^*=k} \log R_n\left(\left\{k_n^*, p_n\right\}_{n=1}^{N}\right)
\vspace{0.0cm}\\\hspace{6cm}
\forall k\in\mathcal{K} \;.
\ena
\eeq
\vspace{0.0cm}
\end{theorem}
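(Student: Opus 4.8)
The plan is to show that the per-channel objective in (\ref{eq:NEP_cooperative_p}), although it appears to couple the attempt probabilities of mutually interfering users, separates into a sum of single-variable strictly concave functions, each maximized precisely at $p_n^*=1/(\left|\mathcal{I}_n(k_n^*)\right|+1)$.

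First I would fix a channel $k$ and rewrite the objective. Since $M=1$, each user transmits on a single channel, so by (\ref{eq:log_rate}) and (\ref{eq:log_interference}) every user $n$ with $k_n^*=k$ satisfies
\[
\log R_n = \log\left(u_n(k)p_n\right) + \sum_{i\in\mathcal{I}_n(k)}\log\left(1-p_i\right).
\]
Summing over all users assigned to channel $k$ and discarding the constants $\log u_n(k)$ (which do not affect the maximizer), the channel-$k$ objective reduces to maximizing
\[
\sum_{n:\,k_n^*=k}\log p_n \;+\; \sum_{n:\,k_n^*=k}\;\sum_{i\in\mathcal{I}_n(k)}\log\left(1-p_i\right).
\]

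The key step, and the one I expect to be the main obstacle, is decoupling the double sum. Here I would exploit the symmetry of the interference graph $G$: $i\in\mathcal{I}_n$ if and only if $n\in\mathcal{I}_i$. Since both $n$ and $i$ lie on channel $k$, the term $\log(1-p_m)$ of any fixed user $m$ with $k_m^*=k$ appears in the double sum once for each neighbor $n$ of $m$ on channel $k$, i.e.\ exactly $\left|\mathcal{I}_m(k)\right|$ times. Exchanging the order of summation therefore gives
\[
\sum_{n:\,k_n^*=k}\;\sum_{i\in\mathcal{I}_n(k)}\log\left(1-p_i\right)
=\sum_{n:\,k_n^*=k}\left|\mathcal{I}_n(k)\right|\log\left(1-p_n\right),
\]
so the objective becomes the \emph{separable} sum $\sum_{n:\,k_n^*=k}\left[\log p_n + \left|\mathcal{I}_n(k)\right|\log\left(1-p_n\right)\right]$, in which each summand involves only its own variable $p_n$.

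Finally, the maximization splits across users, and each summand is exactly the strictly concave function analyzed in the proof of Theorem \ref{th:NEP_cooperative1}, whose unique maximizer over $[0,1]$ is $p_n^*=1/(\left|\mathcal{I}_n(k)\right|+1)$ (with $p_n^*=1$ when $\left|\mathcal{I}_n(k)\right|=0$). Strict concavity of every summand makes the separable objective strictly concave in the vector of channel-$k$ probabilities, which yields uniqueness of the maximizer; note also that this reduced objective depends only on the probabilities of users on channel $k$. As the channel allocation partitions the users across channels and the argument applies verbatim to every $k\in\mathcal{K}$, the requirement that the profile maximize each channel's objective simultaneously pins down every $p_n^*$, so the stated profile $\sigma^*$ is the unique solution of (\ref{eq:NEP_cooperative_p}), completing the proof.
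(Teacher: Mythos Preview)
Your proof is correct and follows essentially the same approach as the paper: write out $\sum_{n:k_n^*=k}\log R_n$, observe it is strictly concave in the channel-$k$ attempt probabilities, and identify the unique maximizer $p_n^*=1/(\left|\mathcal{I}_n(k)\right|+1)$ by first-order conditions. The only difference is presentational: you make the symmetry-based separability of the double sum explicit before optimizing, whereas the paper asserts strict concavity of $L_k$ and differentiates directly (implicitly using the same neighbor-counting symmetry to obtain $\partial L_k/\partial p_n = 1/p_n - \left|\mathcal{I}_n(k)\right|/(1-p_n)$).
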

\begin{proof}
Let $\mathcal{N}_k$ be the set of users that select channel $k$. The achievable rate of user $n\in\mathcal{N}_k$ is given by:
\begin{center}
$
R_n\left(\sigma\right)=u_n(k)p_n\prod_{i\in\mathcal{I}_n(k)}\left(1-p_i\right)\;,
$
\end{center}
where $\mathcal{I}_n(k)$ is the set of user $n$'s neighbors that transmit over channel $k$. Taking log on both sides yields:
\begin{center}
$
\bea{l}
\displaystyle\log(R_n\left(\sigma\right))=\log(u_n(k))+\log(p_n)+\sum_{i\in\mathcal{I}_n(k)}
\log\left(1-p_i\right)\;, \vspace{0.0cm}\\\hspace{6cm}
 \forall n\in\mathcal{N}_k \;.
\ena
$
\end{center}
Let $L_k\triangleq\sum_{n\in\mathcal{N}_k}\log(R_n\left(\sigma\right))$ be the sum log rate on channel $k$. Hence,
\begin{center}
$
\bea{l}
\displaystyle L_k=\sum_{n\in\mathcal{N}_k}\left[\log(u_n(k))+\log(p_n) 
                \displaystyle+\sum_{i\in\mathcal{I}n(k)}
                \log\left(1-p_i\right)\right]. \vspace{0.0cm}
\ena
$
\end{center}
Note that $L_k$ is a strictly concave function of $p_n, n\in\mathcal{N}_k$. Therefore, it has a unique global maximum. Differentiating $L_k$ with respect to $p_n\;, \; n\in\mathcal{N}_k$, and equating to zero yields $p_n^*=\frac{1}{\left|\mathcal{I}_n(k)\right|+1}$ for all $n\in\mathcal{N}_k$, which completes the proof.
\vspace{0.0cm}
\end{proof}

Combining Theorems \ref{th:NEP_cooperative1} and \ref{th:NEP_cooperative2} yields: \vspace{0.0cm}

\begin{corollary}
\label{cor:NEP2}
A strategy profile $\sigma^*=\left\{k_n^*, p_n^*\right\}_{n=1}^{N}$ is a \vspace{0.0cm} NEP for the fairness game if $\left\{p_n^*\right\}_{n=1}^{N}$ solves (\ref{eq:NEP_cooperative_p}).
\vspace{0.0cm}
\end{corollary}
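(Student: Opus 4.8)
The plan is to derive the corollary as an immediate consequence of Theorems \ref{th:NEP_cooperative1} and \ref{th:NEP_cooperative2}, the only substantive ingredient being the uniqueness asserted in the latter. First I would recall that, for a fixed channel allocation $\{k_n^*\}_{n=1}^{N}$, the objective in (\ref{eq:NEP_cooperative_p}) splits into $K$ decoupled per-channel problems, one for each $k\in\mathcal{K}$: since $M=1$, every user transmits on exactly one channel, so $p_n$ enters only the sum-log rate $L_k$ associated with its own channel $k_n^*=k$. Theorem \ref{th:NEP_cooperative2} establishes that each such $L_k$ is strictly concave in the attempt probabilities of the users assigned to channel $k$, so the maximizer is unique and given in closed form by $p_n^*=1/(|\mathcal{I}_n(k_n^*)|+1)$.

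Next I would use this uniqueness to translate the hypothesis of the corollary into the structural condition required by Theorem \ref{th:NEP_cooperative1}. By assumption $\{p_n^*\}_{n=1}^{N}$ solves (\ref{eq:NEP_cooperative_p}); since Theorem \ref{th:NEP_cooperative2} guarantees that this solution is unique and equals $p_n^*=1/(|\mathcal{I}_n(k_n^*)|+1)$ for every $n$, the assumed solution must coincide with that explicit value. In other words, solving the per-channel sum-log-rate optimization is equivalent to selecting precisely the interference-aware attempt probabilities prescribed by Theorem \ref{th:NEP_cooperative1}.

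Finally I would invoke Theorem \ref{th:NEP_cooperative1}: any profile $\sigma^*=\{k_n^*, p_n^*\}_{n=1}^{N}$ with $p_n^*=1/(|\mathcal{I}_n(k_n^*)|+1)$ for all $n$ is a NEP of the fairness game, which is exactly the desired conclusion. No further verification of the Nash inequality (\ref{eq:NEP2}) is needed, because that inequality was already proved in Theorem \ref{th:NEP_cooperative1} for every channel allocation together with the corresponding attempt probabilities. I do not expect a genuine obstacle here; the single point demanding care is the appeal to strict concavity, and hence to the \emph{uniqueness} of the maximizer, in Theorem \ref{th:NEP_cooperative2}. That uniqueness is what licenses the passage from ``$\{p_n^*\}$ solves (\ref{eq:NEP_cooperative_p})'' to ``$p_n^*$ has the closed-form value'', and it is therefore the hinge that allows the two theorems to be chained into the corollary.
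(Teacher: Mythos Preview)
Your proposal is correct and matches the paper's own argument, which is simply the one-line remark ``Combining Theorems \ref{th:NEP_cooperative1} and \ref{th:NEP_cooperative2} yields'' the corollary. You have merely made explicit the role of uniqueness in Theorem \ref{th:NEP_cooperative2} that allows the hypothesis ``$\{p_n^*\}$ solves (\ref{eq:NEP_cooperative_p})'' to be identified with the closed-form condition of Theorem \ref{th:NEP_cooperative1}, which is exactly the intended logic.
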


Corollary \ref{cor:NEP1} follows directly from the NEPs' structure characterized in Theorem \ref{th:NEP_cooperative1}. We will use the fact that attaining the global maximum of (\ref{eq:exact_potential_function}) implies $\displaystyle p_n=\frac{1}{\left|\mathcal{I}_n(k_n)\right|+1}$ for all $n$ to design a distributed learning algorithm that converges to the solution of (\ref{eq:cooperative_optimization}). Corollary \ref{cor:NEP2} sheds a light on the operating points of the system. Learning algorithms used to converge to a global optimum may spend some time at local maxima of the objective function (i.e., a NEP). Corollary \ref{cor:NEP2} shows that the local maxima of the potential function may not be so bad. Specifically, every NEP of the fairness game can be viewed as a local proportional fairness in the sense that proportionally fair rates are attained among all users that share channel $k$ for all $k\in\mathcal{K}$.

\subsection{Distributed Cooperative Learning Algorithm}
\label{ssec:NBRF}

The optimization problem in (\ref{eq:cooperative_optimization}) is a combinatorial optimization problem over a graph, and it requires a centralized solution that uses global information which is impractical in large-scale networks. Therefore, we propose a probabilistic approach to solve the problem in a distributed manner. We develop a distributed cooperative learning algorithm, dubbed Noisy BR for Fairness (NBRF) algorithm, with the goal of solving (\ref{eq:cooperative_optimization}) using limited message exchanges between neighbors only. NBRF is a cooperative algorithm in the sense that users make decisions with respect to the cooperative utility that balances between their own utilities and the interference level they cause to their neighbors.

Recall that BR dynamics may lead to local maxima of the potential function. Hence, instead of playing purely BR, in NBRF users play noisy BR (also known as spatial adaptive play or log-linear learning) when updating their strategies \cite{Blume_1993_Statistical, Young_1998_Individual, Marden_2012_Revisiting}. In NBRF, active users construct a probability mass function (pmf) over their actions and draw their actions according to this distribution. Typically, the BR is played with high probability, while other strategies are played with a probability that decays exponentially fast with the myopic utility loss in order to escape local maxima.
Specifically, the pmf over the available actions is given by:
\beq
\label{eq:pmf}
\displaystyle\Pr((k_n, p_n)=(k,p))=\frac{e^{\beta F_n(k, p, k_{-n}, p_{-n})}}{\displaystyle \sum_{k'=1}^{K}\sum_{r=1}^{|\mathcal{I}_n|+1}e^{\beta F_n(k', 1/r, k_{-n}, p_{-n})}}
\eeq
for some exploration parameter $\beta>0$. For the ease of presentation, we assume continuous random rates $u_n(k)$ to guarantee a uniqueness of the maximizer (otherwise BRs are drawn uniformly). Note that when $\beta=0$ the pmf assigns equal weights on all strategies, while the probability of playing BR approaches one as $\beta\rightarrow\infty$ (a discussion on the setting of $\beta$ based on simulated annealing analysis \cite{Hajek_1988_Cooling} is provided in the end of this section). The NBRF Algorithm is given in Table \ref{tab:NBRF}. The set of active users in Step 3 is determined according to the distributed mechanism described in Sec. \ref{sec:network}. Step $5$ requires the active users to construct the pmf given in (\ref{eq:pmf}) based on the computation of $F_n(k, p, k_{-n}, p_{-n})$ for all $k=1, ..., K$, $p=1, 1/2, ..., 1/(|\mathcal{I}_n|+1)$ given in (\ref{eq:F}). In Step 6, active users must send complete information about their updated strategies to their neighbors such that all users can compute their cooperative utility at each given updating time. A similar mechanism as described in Section \ref{sec:network} can be applied, where the pilot signal is now replaced by a packet containing complete information about the updated strategy.  Users may repeat updating strategies until their rates converge or for a predetermined number of iterations and then stick their BR (see a discussion in the end of this section).

\begin{table}
\caption{NBRF Algorithm}
\centering
\normalsize\begin{tabular}{|l|}
\hline
\vspace{-0.3cm}\\
\hspace{0.3cm} 1) \hspace{0.3cm} \textbf{Initialize}\\
\hspace{0.3cm}    \hspace{0.7cm} based on message exchanges between neighbors \\ \hspace{1.1cm}
each user (say $n$) set $k_n\leftarrow \arg \; \displaystyle\max_k \; \left\{ u_n(k) \right\}$
 \\ \hspace{1.1cm} and $p_n\leftarrow 1/\left(|\mathcal{I}_n(k_n)|+1\right)$. \\\\
\hspace{0.3cm} 2) \hspace{0.3cm} \textbf{repeat} (at each updating time)\hspace{0.5cm} \\
\hspace{0.3cm} 3) \hspace{0.9cm} $\mathcal{N}_a\leftarrow$ updated set of active users\\
\hspace{0.3cm} 4) \hspace{0.9cm} \textbf{for} $n\in\mathcal{N}_a$ \textbf{do}\\
\hspace{0.3cm} 5) \hspace{1.5cm} draw $(k_n,p_n)$ randomly according
\\
\hspace{2.35cm}
to the distribution given in (\ref{eq:pmf}). \\
\hspace{0.3cm} 6) \hspace{1.5cm} send a packet containing $(k_n,p_n)$ to \\
\hspace{2.35cm}                   inform all neighbors $\mathcal{I}_n$
\\
\hspace{0.3cm} 7) \hspace{0.9cm} \textbf{end for}\\
\hspace{0.3cm} 8) \hspace{0.2cm} \textbf{until} all rates converge \\
\hline
		\end{tabular}
	\label{tab:NBRF}
\end{table}

The following theorem shows that NBRF attains proportional fairness with an arbitrarily high probability as time increases.\vspace{0.0cm}

\begin{theorem}
\label{th:NBRF_optimal}
Let $\sigma^{NBRF(\beta)}(t), \Sigma^*$ be the strategy profile under NBRF (with a parameter $\beta$) at time $t$ and the set of strategy profiles that solves (\ref{eq:cooperative_optimization}), respectively. For any $\epsilon>0$ there exists $\beta>0$ such that
\beq
\label{eq:NBRF_convergence}
\displaystyle\lim_{t\rightarrow\infty} \Pr\left(\sigma^{NBRF(\beta)}(t)\in\Sigma^*\right)\geq 1-\epsilon \;.
\eeq
\vspace{0.0cm}
\end{theorem}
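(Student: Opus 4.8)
The plan is to analyze the NBRF dynamics as a finite-state discrete-time Markov chain on the joint strategy space and to identify its stationary distribution as a Gibbs distribution whose mass concentrates, as $\beta\to\infty$, exactly on the global maximizers of the potential $\phi(\sigma)=\sum_{n=1}^{N}\log R_n(\sigma)$ from Theorem \ref{th:exact}. First I would reduce to a finite state space. The channel component $k_n$ already lives in the finite set $\mathcal{K}$, and the pmf (\ref{eq:pmf}) only ever places mass on attempt probabilities of the form $p_n\in\{1,1/2,\dots,1/(|\mathcal{I}_n|+1)\}$. By Corollary \ref{cor:NEP1}, every solution of (\ref{eq:cooperative_optimization}) satisfies $p_n=1/(|\mathcal{I}_n(k_n)|+1)$, and since $|\mathcal{I}_n(k_n)|\le|\mathcal{I}_n|$ this optimal value always belongs to that finite grid. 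Hence the global maximum of $\phi$ is attained inside the discretized action space explored by NBRF; let $\Omega$ denote the resulting finite strategy space and $\Sigma^*=\arg\max_{\sigma\in\Omega}\phi(\sigma)$, which coincides with the solution set of (\ref{eq:cooperative_optimization}).

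Next I would set up the Markov chain. Using the coordination mechanism of Section \ref{sec:network} so that no two neighbors update simultaneously (hence each active user's revision depends only on a fixed neighbor profile), at each updating time a single user's strategy is replaced by a fresh draw from (\ref{eq:pmf}) while the rest are held fixed; this defines the one-step kernel $P_\beta$ on $\Omega$. The chain is irreducible and aperiodic for every finite $\beta$: since $\beta F_n$ is finite, every action receives strictly positive probability in (\ref{eq:pmf}), so any state is reachable from any other in finitely many single-user revisions, and self-loops have positive probability. The core step is to show that $P_\beta$ is reversible with respect to the Gibbs distribution $\mu_\beta(\sigma)\propto e^{\beta\phi(\sigma)}$. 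Here Theorem \ref{th:exact} does the heavy lifting: because the fairness game is an exact potential game, a single user's myopic utility difference equals the potential difference, $F_n(\sigma_n^{(2)},\sigma_{-n})-F_n(\sigma_n^{(1)},\sigma_{-n})=\phi(\sigma_n^{(2)},\sigma_{-n})-\phi(\sigma_n^{(1)},\sigma_{-n})$, so the ratio of exponential weights in (\ref{eq:pmf}) governing a single-user transition can be rewritten purely in terms of $\phi$. Verifying the detailed-balance equation $\mu_\beta(\sigma)P_\beta(\sigma,\sigma')=\mu_\beta(\sigma')P_\beta(\sigma',\sigma)$ for two states differing in one user's action then reduces to this exact-potential identity (the user-selection probabilities cancel, since only one coordinate differs), establishing $\mu_\beta$ as the unique stationary distribution.

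Finally I would let $\beta\to\infty$. Writing $\phi^*=\max_{\sigma\in\Omega}\phi(\sigma)$ and dividing numerator and denominator of $\mu_\beta$ by $e^{\beta\phi^*}$, every non-maximizing state contributes a term $e^{\beta(\phi(\sigma)-\phi^*)}\to 0$, so $\mu_\beta(\Sigma^*)\to 1$. Given $\epsilon>0$, choose $\beta$ with $\mu_\beta(\Sigma^*)\ge 1-\epsilon$; ergodicity then yields $\lim_{t\to\infty}\Pr(\sigma^{NBRF(\beta)}(t)\in\Sigma^*)=\mu_\beta(\Sigma^*)\ge 1-\epsilon$, which is exactly (\ref{eq:NBRF_convergence}).

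I expect the main obstacle to be the reversibility argument, namely carefully confirming that the single-user revision kernel satisfies detailed balance with respect to $e^{\beta\phi}$: one must check that the exact-potential identity of Theorem \ref{th:exact} converts the \emph{local} utility weights in (\ref{eq:pmf}) into a common global factor and that the user-selection randomization cancels in the balance equation. A secondary subtlety worth stating explicitly is the asynchronous single-update assumption afforded by the mechanism of Section \ref{sec:network}; if simultaneous neighbor updates were permitted, $\mu_\beta$ would no longer be stationary, so the coordination mechanism (or a vanishing simultaneous-update probability) is essential for the Gibbs characterization to hold.
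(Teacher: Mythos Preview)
Your proposal is correct and follows essentially the same route as the paper's proof: both reduce to the finite discretized action space via Corollary \ref{cor:NEP1}, invoke the exact-potential property of Theorem \ref{th:exact} to identify the Gibbs distribution $\mu_\beta\propto e^{\beta\phi}$ as the stationary law of the log-linear dynamics, and then choose $\beta$ large so that $\mu_\beta(\Sigma^*)\ge 1-\epsilon$. The only difference is that the paper delegates the stationary-distribution claim to a citation of Blume (1993), whereas you spell out irreducibility, aperiodicity, and detailed balance explicitly.
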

\begin{proof}
The proof is based on the results reported in Section \ref{ssec:NEP_cooperative} and the fact that a noisy best response dynamics following (\ref{eq:pmf}) in exact potential games converges to a stationary distribution of the Markov chain corresponding to the game \cite{Blume_1993_Statistical}. By Theorem \ref{th:exact}, the fairness game with the cooperative utility $F_n$ is an exact potential game with an exact potential function $\phi$ given in (\ref{eq:exact_potential_function}). Since NBRF plays noisy BR with respect to $F_n$, the stationary distribution of the strategy profile is given by \cite{Blume_1993_Statistical}:
\beq
\label{eq:stationary_dist}
\displaystyle Pr(\sigma^{NBRF(\beta)}=\sigma)=\frac{e^{\beta \phi(\sigma)}}{\displaystyle \sum_{\tilde{\sigma}}e^{\beta\phi(\tilde{\sigma})}}\;.
\eeq
Next, note that the number of user $n$'s neighbors that transmit over channel $k_n$, $|\mathcal{I}_n(k_n)|$, is lower bounded by $|\mathcal{I}_n|$ for all $n$. Therefore, following Corollary \ref{cor:NEP1}, the strategy profile $\sigma^*$ that attains the global maximum of (\ref{eq:exact_potential_function}) lies inside the action space played by NBRF. Therefore, for every $\epsilon>0$ we can choose $\beta>0$ sufficiently large such that the stationary distribution puts a sufficiently high weight on the strategy profile that maximizes (\ref{eq:exact_potential_function}) (i.e., $\phi$ in (\ref{eq:stationary_dist})). Thus, (\ref{eq:NBRF_convergence}) is satisfied as time approaches infinity.
\vspace{0.0cm}
\end{proof}

Following the proof of Theorem \ref{th:NBRF_optimal}, the stationary distribution of the homogenous Markov chain with a fixed $\beta$ corresponding to the game is given by (\ref{eq:stationary_dist}). As a result, as the probability of playing BR increases (i.e., by increasing $\beta$) the probability of attaining the global maximum of the potential function (\ref{eq:exact_potential_function}) increases with time. Achieving the optimal solution (i.e., letting $\epsilon\rightarrow 0$ in (\ref{eq:NBRF_convergence})) requires $\beta$ to approach infinity. However, increasing $\beta$ too fast may push the algorithm into a local maximum for a long time (since the probability of not playing BR is too small). Next, let $\beta=\beta(t)$ be a function of time. The process of increasing $\beta(t)$ during the algorithm is also known as \emph{cooling} the system in simulated annealing analysis, where $T(t)=1/\beta(t)$ represents the \emph{temperature}. Following simulated annealing analysis \cite{Hajek_1988_Cooling}, convergence to the optimal solution is attained by increasing $\beta(t)$ as $\beta(t)=\log(t)/\Delta, t=1, 2, ...\;$ ($\Delta$ is a constant and will be discussed in the sequel). As a result, users explore strategy profiles in the beginning of the algorithm and will stick their BR as time approaches infinity. In cases where the optimal operating point is not unique, the algorithm may converge to one of the optimal operating points. An alternative way is to set a piecewise constant $\beta(t)$ over time as suggested in \cite{Tsitsiklis_1988_Survey}. Let $\left\{t_k^{\beta}\right\}$ be an increasing time sequence, with $t_1^{\beta}=1$, where $\beta(t)=k$ is kept fixed for all $t_k^{\beta}\leq t\leq t_{k+1}^{\beta}$. Intuitively  speaking, the total time between $t_k^{\beta}$ and $t_{k+1}^{\beta}$ should be large enough, such that the stationary Markov chain associated with the system under a fixed $\beta(t)=k$ will approach arbitrary close to a steady state (with a stationary distribution given in (\ref{eq:stationary_dist})). Following simulated annealing analysis in \cite{Tsitsiklis_1988_Survey}, it suffices to let $t_{k+1}^{\beta}-t_k^{\beta}=e^{k\Delta}$. Note that the piecewise constant update has a logarithmic order with time since $\beta(t_k^{\beta})=k\approx\log t_k^{\beta}/\Delta$. It suffices to set the constant $\Delta$ to be greater than the maximal change in the objective function. Since the maximal value of the objective function is upper bounded by $N\log \max_{n, k_n} u_n(k_n)$ and the minimal value is lower bounded by $N(\log\left(\frac{\min_{n, k_n} u_n(k_n)}{\max_n |I_n|+1}\right)-\max_n |I_n|\log 2)$, it suffices to set $\Delta>N\left(\log\left(\max_{n, k_n} u_n(k_n)\right)-\log\left(\frac{\min_{n, k_n} u_n(k_n)}{\max_n |I_n|+1}\right)\right.$ $\left.+\max_n |I_n|\log 2\right)$ to achieve convergence. It should be noted, however, that simulation results demonstrate fast convergence to the optimal solution with much smaller values of $\Delta$ under typical scenarios.

\section{Numerical Examples}
\label{sec:simulation}

In this section we provide numerical examples to illustrate the performance of the algorithms. We simulated the following network: $N$ users were randomly dispersed (uniformly) in a circle region with a radius of $10$ meters. Each user causes interference to all users in a radius of $5$ meters. Every user can choose one channel for transmission among $K$ channels. We assume equal achievable rates $u_n(k)=100$Mbps for all users on all channels when channels are free (i.e., collision-free utility). We performed $1,000$ Monte-Carlo experiments and averaged the performance over experiments. The randomness for each trial over which the average performance is plotted comes from the random dynamic nature of the user updates (thus, each experiment results in a different update dynamic and might even converge to a different equilibrium point).

We first consider the distributed rate maximization problem under the non-cooperative setting, where each user maximizes its own rate under a constraint on the attempt probability. The estimation of $v_n$ is based on a moving window of $100$ packets. We first examine a small connected network with $N=10$ users sharing $K=2$ channels, so as the centralized optimal exhaustive search solution (in terms of sum-log rate) can be computed and serve as a benchmark for comparison. An illustration of the small network is depicted in Fig. \ref{fig:small_network}. In Fig. \ref{fig:non_coop_N_10_K_2} we present the average rate to demonstrate the performance in terms of efficiency under fixed attempt probabilities $P=2/3$ for all users. Though optimality is not guaranteed under BR-DRM due its greedy nature, it can be seen that under this small network model BR-DRM converges to the optimal channel allocation (in terms of sum-log rate) in finite time and significantly improves performance as compared to a random channel allocation. Next, we examine the case of a large network, in which the number of users varies during time to demonstrate the robustness of the proposed algorithm. We initialized the network size by $N=250$ users, where $N/2$ users are allowed to transmit with attempt probability $P=0.7$ (e.g., primary or high-priority users) and $N/2$ users are allowed to transmit with attempt probability $P=0.3$ (e.g., secondary or low-priority users). We set the number of channels to $K=30$ (i.e., a channel represents a subsets of subcarriers as in OFDMA or allocation to PALs in the context of spectrum sharing) (in this case computing the optimal solution is intractable). We first increase the network size by adding $10$ users after $100$ iterations. Then, we increase the network size more aggressively by adding another $40$ users. It can be seen that BR-DRM converges to the equilibrium points very fast, and significantly outperforms the naive algorithm for all time instants. Note that we can further increase the robustness of the algorithms by allowing users to update their strategies only when they improve their rates by more than a predefined value.
\begin{figure}[htbp]
\centering \epsfig{file=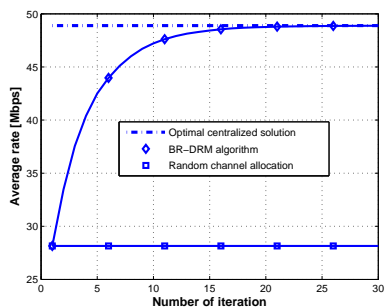,
width=0.35\textwidth}
\caption{Average rate as a function of the number of iterations. A wireless network containing 10 users and 2 channels. Each user transmits with an attempt probability $2/3$.}
\label{fig:non_coop_N_10_K_2}
\end{figure}

\begin{figure}[htbp]
\centering \epsfig{file=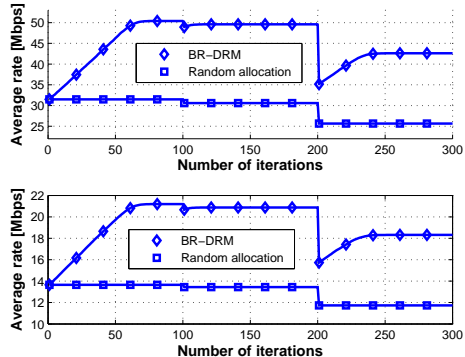,
width=0.42\textwidth}
\caption{Average rate as a function of the number of iterations under a time varying network size. $N=250, 260, 300$ for $1\leq t<100, 100\leq t<200, 200\leq t\leq 300$, respectively (where $t$ denotes the iteration index). In the top figure, the average rate of $N/2$ users with attempt probability $0.7$ is presented. In the bottom figure, the average rate of $N/2$ users with attempt probability $0.3$ is presented.}
\label{fig:non_coop_N_300_K_30}
\end{figure}

Second, we consider the cooperative setting, where the goal is to find a channel allocation and attempt probabilities in a distributed manner so as to attain proportionally fair rates among users. We compare the NBRF algorithm, given in Table \ref{tab:NBRF}, with the random channel allocation scheme, where the optimal attempt probabilities were set under any random channel allocation (i.e., $p_n=1/(|I_n(k_n)|+1)$ for all $n$). In the NBRF algorithm, we set $\beta=\log t$ (where $t=1, 2, ...$ indicates the iteration number) to construct the pmf in Step 6. We first examine a small connected network with $N=10$ users sharing $K=2$ channels, so as the centralized optimal exhaustive search solution can be computed and serve as a benchmark for comparison. An illustration of the small network is depicted in Fig. \ref{fig:small_network}. In Fig. \ref{fig:coop_N_10_K_2} we present the average log rate to demonstrate the performance in terms of proportional fairness and also the average rate to demonstrate the achievable effective rates. It can be seen that NBRF significantly improves performance as compared to a random channel allocation (even though the attempt probabilities are optimal given any random channel allocation) in terms of both fairness and efficiency. It can be seen that NBRF approaches the optimal centralized solution as time increases. This result demonstrates the efficiency of the proposed distributed learning algorithm in achieving the global proportional fairness in the network.

Next, we consider a large network, in which the number of users varies during time to demonstrate the robustness of the proposed NBRF algorithm. We initialized the network size by $N=80$ users, and set the number of channels to $K=10$ (in this case computing the optimal solution is intractable). We first increase the network size by adding $5$ users after $200$ iterations. Then, we increase the network size more aggressively by adding another $15$ users. It can be seen that NBRF approaches the equilibrium points very fast and significantly outperforms the random channel allocation for all time instants.

\begin{figure}[htbp]
\centering \epsfig{file=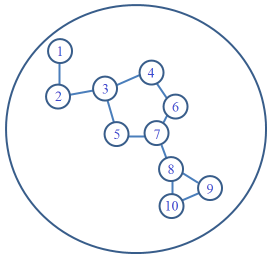,
width=0.28\textwidth}
\caption{An illustration of a small connected network with $10$ users spatially distributed in a circle area of radius $10$ meters. The users share $2$ channels. Each pair of users with distance less than $2$ meters (represented by an edge) cause mutual interference when transmitting simultaneously over the same channel.}
\label{fig:small_network}
\end{figure}

\begin{figure}[htbp]
\centering \epsfig{file=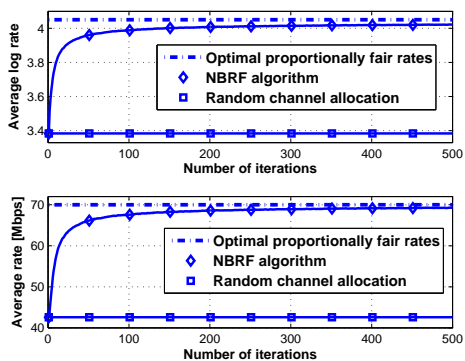,
width=0.42\textwidth}
\caption{Average sum-log rate and average rate as a function of the number of iterations. A wireless network containing 10 users and 2 channels.}
\label{fig:coop_N_10_K_2}
\end{figure}

\begin{figure}[htbp]
\centering \epsfig{file=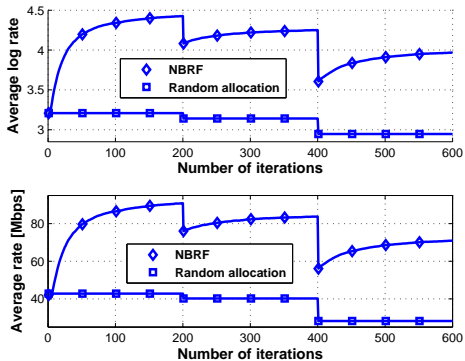,
width=0.42\textwidth}
\caption{Average sum-log rate and average rate as a function of the number of iterations under a time varying network size. $N=80, 85, 100$ for $1\leq t<200, 200\leq t<400, 400\leq t\leq 600$, respectively (where $t$ denotes the iteration index). A wireless network containing $80-100$ users and $10$ channels.}
\label{fig:coop_N_50_K_5}
\end{figure}

\section{Conclusion}
\label{sec:conclusion}

The distributed optimization problem over multiple collision channels shared by spatially distributed users was considered. We examined both the non-cooperative and cooperative settings. Under the non-cooperative setting, we developed a distributed learning algorithm for the distributed rate maximization problem, in which each user maximizes its own rate irrespective of other user utilities. Convergence was proved using the theory of best-response potential games. Under the cooperative setting, we developed a distributed cooperative learning algorithm to achieve the global proportional fairness in the networks. While direct computation of the optimal solution is impractical in large-scale networks, we showed that the proposed distributed algorithm converges to the global optimum with high probability as time increases. Simulation results demonstrated strong performance of the algorithms.

Future research directions are convergence time analysis of the proposed algorithms, analyzing their performance under malicious/malfunctioning nodes, and extensions of the NEPs efficiency analysis under the non-cooperative setting.

\section{Appendix}
\label{app}

\subsection{Occurrence of Cycles in the DRM Game Under Better-Response Dynamics}
\label{app:example}

In Theorem \ref{th:BR_potential_game} in Section \ref{ssec:cooperative} we have shown that the DRM game is a best-response potential game for $M\geq 1$ (i.e., no cycles occur when a best-response dynamics is implemented). Here, we provide an example that shows that cycles may occur when a better-response dynamics is implemented for $M>1$. Assume $N=2$ users, $K=4$ channels, $M=2$ and $P_1=P_2=0.5$. Consider the following utility matrix:
\beq
\label{eq:example_utility_matrix}
\bea{l}
\mathbf{U} \triangleq
\left[ \begin{matrix}
u_1(1) & u_1(2) & u_1(3) & u_1(4) \\
u_2(1) & u_2(2) & u_2(3) & u_2(4) \\
\end{matrix} \right]
=
\left[ \begin{matrix}
1 & 2 & 1 & 2 \\
2 & 1 & 2 & 1 \\
\end{matrix} \right]
\ena
\eeq
and an initial strategy profile:
\beq
\label{eq:example_sigma0}
\bea{l}
\sigma^{(0)} \triangleq
\left[ \begin{matrix}
P_1 & P_1 & 0 & 0 \\
0 & P_2 & P_2 & 0 \\
\end{matrix} \right]
=
\left[ \begin{matrix}
0.5 & 0.5 & 0 & 0 \\
0 & 0.5 & 0.5 & 0 \\
\end{matrix} \right]\;.
\ena
\eeq
Next, we present a better-response dynamics which results in a cycle. Assume updating time $t_1$ and let user $1$ update its strategy by switching from channels $1, 2$ (with rate $0.5\cdot 1+0.25\cdot 2=1$) to channels $3, 4$ (with a higher rate $0.5\cdot 2+0.25\cdot 1=1.25$):
\beq
\label{eq:example_sigma1}
\bea{l}
\sigma^{(1)} =
\left[ \begin{matrix}
0 & 0 & 0.5 & 0.5 \\
0 & 0.5 & 0.5 & 0 \\
\end{matrix} \right]\;.
\ena
\eeq
At updating time $t_2$ user $2$ updates its strategy by switching from channels $2, 3$ (with rate $0.5\cdot 1+0.25\cdot 2=1$) to channels $1, 4$ (with a higher rate $0.5\cdot 2+0.25\cdot 1=1.25$):
\beq
\label{eq:example_sigma2}
\bea{l}
\sigma^{(2)} =
\left[ \begin{matrix}
0 & 0 & 0.5 & 0.5 \\
0.5 & 0 & 0 & 0.5 \\
\end{matrix} \right]\;.
\ena
\eeq
At updating time $t_3$ user $1$ updates its strategy by switching from channels $3, 4$ (with rate $0.5\cdot 1+0.25\cdot 2=1$) to channels $1, 2$ (with a higher rate $0.5\cdot 2+0.25\cdot 1=1.25$):
\beq
\label{eq:example_sigma3}
\bea{l}
\sigma^{(3)} =
\left[ \begin{matrix}
0.5 & 0.5 & 0 & 0 \\
0.5 & 0 & 0 & 0.5 \\
\end{matrix} \right]
\ena
\eeq
At updating time $t_4$ user $2$ updates its strategy by switching from channels $1, 4$ (with rate $0.5\cdot 1+0.25\cdot 2=1$) to channels $2, 3$ (with a higher rate $0.5\cdot 2+0.25\cdot 1=1.25$):
\beq
\label{eq:example_sigma4}
\bea{l}
\sigma^{(4)} =
\left[ \begin{matrix}
0.5 & 0.5 & 0 & 0 \\
0 & 0.5 & 0.5 & 0 \\
\end{matrix} \right]
=\sigma^{(0)}\;.
\ena
\eeq

\subsection{Proof of Theorem \ref{th:BR_potential_game}}
\label{app:pr_ordinal}
Consider two strategies for user $n_0$, $\sigma_{n_0}^{(1)}=(k_{n_0}^{(1)}, P_{n_0})$, $\sigma_{n_0}^{(2)}=(k_{n_0}^{(2)}, P_{n_0})$, and fix the strategy profile for all other users $\sigma_{-n_0}$. Throughout the proof, the superscript $(i)$ refers to the user strategies given that user $n_0$ plays strategy $\sigma_{n_0}^{(i)}$, for $i=1,2$. The term $I_n(k, k', P, \sigma_{-n,n_0})$ refers to the log-interference function (\ref{eq:log_interference}) when user $n$ chooses channel $k$, user $n_0$ plays strategy $(k', P)$, and all other users except users $n, n_0$ play strategy profile $\sigma_{-n,n_0}$. \vspace{0.0cm} \\
\emph{Step 1: The Improvement in User $n_0$'s Rate:} \vspace{0.0cm}\\
Assume that $\sigma_{n_0}^{(2)}$ is a BR strategy for user $n_0$, i.e., \beq
R_{n_0}(\sigma_{n_0}^{(2)}, \sigma_{-{n_0}})-R_{n_0}(\sigma_{n_0}^{(1)}, \sigma_{-{n_0}})\geq 0
\eeq
for all $\sigma_{n_0}^{(1)}$, such that $k_{n_0}^{(1)}\in\mathcal{K}_M$. %
Let
$\left\{k_{n_0,1}^{(2)}, k_{n_0,2}^{(2)}, ..., k_{n_0,K}^{(2)}\right\}$
be a permutation of $\left\{1, ..., K\right\}$ such that:
\beq
\bea{l}
u_{n_0}(k_{n_0,1}^{(2)})v_{n_0}(k_{n_0,1}^{(2)},\sigma_{-n_0})\geq u_{n_0}(k_{n_0,2}^{(2)})v_{n_0}(k_{n_0,2}^{(2)},\sigma_{-n_0}) \vspace{0.0cm} \\ \hspace{3cm}
\geq\cdots\geq u_{n_0}(k_{n_0,K}^{(2)})v_{n_0}(k_{n_0,K}^{(2)},\sigma_{-n_0})\;.
\ena
\eeq
Following (\ref{eq:non_cooperative_optimization}), the BR channel-selection $k_{n_0}^{(2)}$ is given by:
\beq
k_{n_0}^{(2)}=\left\{k_{n_0,1}^{(2)}, k_{n_0,2}^{(2)}, ..., k_{n_0,M}^{(2)}\right\}\;.
\eeq
Next, arrange the entries of
$\sigma_{n_0}^{(1)}=\left\{k_{n_0,1}^{(1)}, k_{n_0,2}^{(1)}, ..., k_{n_0,K}^{(1)}\right\}$
such that:
\beq
\bea{l}
u_{n_0}(k_{n_0,1}^{(1)})v_{n_0}(k_{n_0,1}^{(1)},\sigma_{-n_0})\geq u_{n_0}(k_{n_0,2}^{(1)})v_{n_0}(k_{n_0,2}^{(1)},\sigma_{-n_0}) \vspace{0.0cm} \\ \hspace{3cm}
\geq\cdots\geq u_{n_0}(k_{n_0,M}^{(1)})v_{n_0}(k_{n_0,M}^{(1)},\sigma_{-n_0})\;.
\ena
\eeq
As a result, by the construction we have:
\beq
\bea{l}
u_{n_0}(k_{n_0,i}^{(2)})v_{n_0}(k_{n_0,i}^{(2)},\sigma_{-n_0})\geq u_{n_0}(k_{n_0,i}^{(1)})v_{n_0}(k_{n_0,i}^{(1)},\sigma_{-n_0}) \vspace{0.0cm} \\
\hspace{4cm}
\forall i=1, ..., M\;.
\ena
\eeq
Next, define
\beq
\bea{l}
\tilde{k}_{n_0}^{(1)}\triangleq k_{n_0}^{(1)}\setminus k_{n_0}^{(2)}
=\left\{\tilde{k}_{n_0,1}^{(1)}, ..., \tilde{k}_{n_0,L}^{(1)}\right\}
\;, \vspace{0.0cm} \\
\tilde{k}_{n_0}^{(2)}\triangleq k_{n_0}^{(2)}\setminus k_{n_0}^{(1)}
=\left\{\tilde{k}_{n_0,1}^{(2)}, ..., \tilde{k}_{n_0,L}^{(2)}\right\}
 \;.
\ena
\eeq
For example, if user $n_0$ selects channels $k_{n_0}^{(1)}=\left\{1, 2, 3\right\}$ and $k_{n_0}^{(2)}=\left\{3, 4, 5\right\}$ according to strategies $\sigma_{n_0}^{(1)}$ and $\sigma_{n_0}^{(2)}$, respectively, then $\tilde{k}_{n_0}^{(1)}=\left\{1, 2\right\}$, $\tilde{k}_{n_0}^{(2)}=\left\{4, 5\right\}$, and $L=2$. Note that $\tilde{k}_{n_0}^{(1)}, \tilde{k}_{n_0}^{(2)}$ have the same cardinality (say $L\triangleq|\tilde{k}_{n_0}^{(1)}|=|\tilde{k}_{n_0}^{(2)}|\leq M$) and denote the differences in the chosen channels under strategies $\sigma_{n_0}^{(1)}, \sigma_{n_0}^{(2)}$ (i.e., $\tilde{k}_{n_0}^{(1)}\cap\tilde{k}_{n_0}^{(2)}=\emptyset$). We arrange
$\left\{\tilde{k}_{n_0,1}^{(i)}, ..., \tilde{k}_{n_0,L}^{(i)}\right\}$ such that:
\beq
\bea{l}
u_{n_0}(\tilde{k}_{n_0,1}^{(i)})v_{n_0}(\tilde{k}_{n_0,1}^{(i)},\sigma_{-n_0})\geq u_{n_0}(\tilde{k}_{n_0,2}^{(i)})v_{n_0}(\tilde{k}_{n_0,2}^{(i)},\sigma_{-n_0}) \vspace{0.0cm} \\ \hspace{3cm}
\geq\cdots\geq u_{n_0}(\tilde{k}_{n_0,L}^{(i)})v_{n_0}(\tilde{k}_{n_0,L}^{(i)},\sigma_{-n_0})\;,
\ena
\eeq
for $i=1, 2$. \\
By the construction and using the monotonicity of the logarithm, we obtain
\beq
\label{eq:app_BR_potential1}
\bea{l}
\displaystyle\Delta\tilde{R}_{n_0,i}\left(\tilde{k}_{n_0,i}^{(1)}, \tilde{k}_{n_0,i}^{(2)},\sigma_{-n_0}\right)\triangleq
\vspace{0.0cm} \\ \hspace{0.5cm}
\displaystyle\log(u_n(\tilde{k}_{n_0,i}^{(2)}))-I_n(\tilde{k}_{n_0,i}^{(2)},\sigma_{-n_0})
\vspace{0.0cm} \\
\displaystyle
-\left(\log(u_n(\tilde{k}_{n_0,i}^{(1)}))-I_n(\tilde{k}_{n_0,i}^{(1)},\sigma_{-n_0})\right)
\geq 0
\;\forall i=1, ..., L\;. \vspace{0.0cm}
\ena
\eeq
\emph{Step 2: The difference in the Potential Function:} \vspace{0.0cm}\\
Next, to prove the theorem we need to show that
\beq
\phi(\sigma_{n_0}^{(2)}, \sigma_{-{n_0}})-\phi(\sigma_{n_0}^{(1)}, \sigma_{-{n_0}})\geq 0
\eeq
The difference in the proposed function (\ref{eq:BR_potential_function}) $\Delta\phi$ is given by:
\begin{center}
$\bea{l}
\vspace{0.0cm}
\Delta\phi\left(\sigma_{n_0}^{(1)}, \sigma_{n_0}^{(2)},\sigma_{-n_0}\right)
\triangleq\displaystyle \phi\left(\sigma_{n_0}^{(2)},\sigma_{-n_0}\right)-\displaystyle \phi\left(\sigma_{n_0}^{(1)},\sigma_{-n_0}\right) \vspace{0.0cm}\\ \hspace{0.0cm}
=
\displaystyle\sum_{n=1}^{N}\log\left(\frac{1}{1-P_n}\right)
\times\vspace{0.0cm}\\ \hspace{2cm}
\displaystyle\sum_{i=1}^{M}\left(\log u_{n}(k_{n,i}^{(2)})-\frac{I_n(k_{n,i}^{(2)},\sigma_{-n}^{(2)})}{2} \right) \vspace{0.0cm}\\ \hspace{0.3cm}
-\displaystyle\sum_{n=1}^{N}\log\left(\frac{1}{1-P_n}\right)
\times\vspace{0.0cm}\\ \hspace{2cm}
\displaystyle\sum_{i=1}^{M}\left(\log u_{n}(k_{n,i}^{(1)})-\frac{I_n(k_{n,i}^{(1)},\sigma_{-n}^{(1)})}{2} \right) \\ \hspace{1cm}
\ena$
\end{center}
\vspace{-1.0cm}
\begin{center}
$\bea{l}
\vspace{0.0cm}
\overset{(a)}{=}\displaystyle\sum_{i=1}^{L}\left[
\sum_{n\in\mathcal{I}_{n_0}:\tilde{k}_{n_0,i}^{(1)}\in k_n}\log\left(\frac{1}{1-P_n}\right)
\times\vspace{0.0cm}
\right. \\ \hspace{1cm}
\displaystyle\left(\log u_{n}(\tilde{k}_{n_0,i}^{(1)})-\frac{I_n(\tilde{k}_{n_0,i}^{(1)},\tilde{k}_{n_0,i}^{(2)},P_{n_0},\sigma_{-n,n_0})}{2} \right) \vspace{0.0cm}\\ \hspace{0.3cm}
\displaystyle+\sum_{n\in\mathcal{I}_{n_0}:\tilde{k}_{n_0,i}^{(2)}\in k_n}\log\left(\frac{1}{1-P_n}\right)
\times\vspace{0.0cm}\\ \hspace{1cm}
\displaystyle\left(\log u_{n}(\tilde{k}_{n_0,i}^{(2)})-\frac{I_n(\tilde{k}_{n_0,i}^{(2)},\tilde{k}_{n_0,i}^{(2)},P_{n_0},\sigma_{-n,n_0})}{2} \right) \vspace{0.0cm}\\ 
\displaystyle+\log\left(\frac{1}{1-P_{n_0}}\right)
\displaystyle\left(\log u_{n_0}(\tilde{k}_{n_0,i}^{(2)})-\frac{I_n(\tilde{k}_{n_0,i}^{(2)},\sigma_{-n_0})}{2} \right) \vspace{0.0cm}\\ \hspace{0.3cm}
\ena$
\end{center}
\vspace{-1.0cm}
\begin{center}
$\bea{l}
\vspace{0.0cm}
\displaystyle-\sum_{n\in\mathcal{I}_{n_0}:\tilde{k}_{n_0,i}^{(1)}\in k_n}\log\left(\frac{1}{1-P_n}\right)
\times\vspace{0.0cm}\\ \hspace{1cm}
\displaystyle\left(\log u_{n}(\tilde{k}_{n_0,i}^{(1)})-\frac{I_n(\tilde{k}_{n_0,i}^{(1)},\tilde{k}_{n_0,i}^{(1)},P_{n_0},\sigma_{-n,n_0})}{2} \right) \vspace{0.0cm}\\ \hspace{0.3cm}
\displaystyle-\sum_{n\in\mathcal{I}_{n_0}:\tilde{k}_{n_0,i}^{(2)}\in k_n}\log\left(\frac{1}{1-P_n}\right)
\times\vspace{0.0cm}\\ \hspace{1cm}
\displaystyle\left(\log u_{n}(\tilde{k}_{n_0,i}^{(2)})-\frac{I_n(\tilde{k}_{n_0,i}^{(2)},\tilde{k}_{n_0,i}^{(1)},P_{n_0},\sigma_{-n,n_0})}{2} \right) \vspace{0.0cm}\\ \hspace{0.3cm}
\displaystyle-\log\left(\frac{1}{1-P_{n_0}}\right)
\times\vspace{0.0cm}\\ \left. \hspace{1cm}
\displaystyle\left(\log u_{n_0}(\tilde{k}_{n_0,i}^{(1)})-\frac{I_n(\tilde{k}_{n_0,i}^{(1)},\sigma_{-n_0})}{2} \right) \right]
\displaystyle\triangleq\sum_{i=1}^L f(i) \;.
\ena$
\end{center}
Equality (a) follows since only users in $\mathcal{I}_{n_0}$ that transmit over channels $\tilde{k}_{n_0,i}^{(1)}, \tilde{k}_{n_0,i}^{(2)}$ experience a change in their interference level. Thus, it suffices to show that every term in the summation is positive (i.e., $f(i)\geq 0$ for all $i=1, ..., L$). After rearranging terms we have:
\begin{center}
$\bea{l}
\displaystyle f(i)=
\displaystyle-\sum_{n\in\mathcal{I}_{n_0}:\tilde{k}_{n_0,i}^{(1)}\in k_n}\log\left(\frac{1}{1-P_n}\right)
\times
\vspace{0.0cm}\\ \hspace{1.5cm}
\displaystyle \frac{I_n(\tilde{k}_{n_0,i}^{(1)},\tilde{k}_{n_0,i}^{(1)}, P_{n_0}, \sigma_{-n,n_0})+\log(1-P_{n_0})}{2} \vspace{0.0cm}\\ \hspace{0.3cm}
\displaystyle-\sum_{n\in\mathcal{I}_{n_0}:\tilde{k}_{n_0,i}^{(2)}\in k_n}{{\log\left(\frac{1}{1-P_n}\right)}{ \displaystyle \frac{I_n(\tilde{k}_{n_0,i}^{(2)},\tilde{k}_{n_0,i}^{(2)}, P_{n_0}, \sigma_{-n,n_0})}{2}}} \vspace{0.0cm}\\ \hspace{0.3cm}
\ena$
\end{center}
\vspace{-1.0cm}
\begin{center}
$\bea{l}
\displaystyle+\log\left(\frac{1}{1-P_{n_0}}\right)\left( \log u_{n_0}(\tilde{k}_{n_0,i}^{(2)})-\frac{I_{n_0}(\tilde{k}_{n_0,i}^{(2)},\sigma_{-n_0})}{2} \right) \vspace{0.0cm}\\ \hspace{0.3cm}
\displaystyle+\sum_{n\in\mathcal{I}_{n_0}:\tilde{k}_{n_0,i}^{(1)}\in k_n}{{\log\left(\frac{1}{1-P_n}\right)}{ \displaystyle \frac{I_n(\tilde{k}_{n_0,i}^{(1)},\tilde{k}_{n_0,i}^{(1)}, P_{n_0}, \sigma_{-n,n_0})}{2}}} \vspace{0.0cm}\\ \hspace{0.3cm}
\displaystyle+\sum_{n\in\mathcal{I}_{n_0}:\tilde{k}_{n_0,i}^{(2)}\in k_n}\log\left(\frac{1}{1-P_n}\right)
\times
\vspace{0.0cm}\\ \hspace{1.5cm}
 \displaystyle \frac{I_n(\tilde{k}_{n_0,i}^{(2)},\tilde{k}_{n_0,i}^{(2)}, P_{n_0}, \sigma_{-n,n_0})+\log(1-P_{n_0})}{2} \vspace{0.0cm}\\ \hspace{0.3cm}
\displaystyle-\log\left(\frac{1}{1-P_{n_0}}\right)\left( \log u_{n_0}(\tilde{k}_{n_0,i}^{(1)})-\frac{I_{n_0}(\tilde{k}_{n_0,i}^{(1)},\sigma_{-n_0})}{2} \right) \;,
\ena$
\end{center}
where the last equality follows by the fact the user $n_0$ contributes $-\log(1-P_{n_0})$ to the log-interference when transmitting over a channel. Hence, after rearranging terms we have:
\begin{center}
$\bea{l}
\displaystyle f(i)=
\displaystyle\log\left(\frac{1}{1-P_{n_0}}\right)\frac{I_{n_0}(\tilde{k}_{n_0,i}^{(1)},\sigma_{-n_0})}{2} \vspace{0.0cm}\\ \hspace{0.3cm}
\displaystyle-\log\left(\frac{1}{1-P_{n_0}}\right)\frac{I_{n_0}(\tilde{k}_{n_0,i}^{(2)},\sigma_{-n_0})}{2} \vspace{0.0cm}\\ \hspace{0.3cm}
\displaystyle+\log\left(\frac{1}{1-P_{n_0}}\right)\left( \log u_{n_0}(\tilde{k}_{n_0,i}^{(2)})-\frac{I_{n_0}(k^{(2)},\sigma_{-n_0})}{2} \right) \vspace{0.0cm}\\ \hspace{0.3cm}
\displaystyle-\log\left(\frac{1}{1-P_{n_0}}\right)\left( \log u_{n_0}(\tilde{k}_{n_0,i}^{(1)})-\frac{I_{n_0}(k^{(1)},\sigma_{-n_0})}{2} \right) \\ \hspace{1cm}
\ena$
\end{center}
\vspace{-1.0cm}
\begin{center}
$\bea{l}
\displaystyle
=\log\left(\frac{1}{1-P_{n_0}}\right)\Delta\tilde{R}_{n_0,i}\left(\tilde{k}_{n_0,i}^{(1)}, \tilde{k}_{n_0,i}^{(2)},\sigma_{-n_0}\right)\geq 0
\ena$
\end{center}
for all $i$. Hence, (\ref{eq:ordinal_potential_def}) follows.
Furthermore, $\phi(\sigma)$ is upper bounded by $\phi(\sigma)\leq M\sum_{n=1}^{N}{\log\left(\frac{1}{1-P_n}\right)\max_{k}\log\left({u}_n(k)\right)}$. %
As a result, $\phi(\sigma)$ in (\ref{eq:BR_potential_function}) is a bounded best-response potential function of the DRM game which completes the proof.
\newcommand*{\QEDA}{\hfill\ensuremath{\blacksquare}}%
\QEDA

\subsection{Proof of Theorem \ref{th:exact}}
\label{app:pr_exact}

Consider two strategies for user $n_0$, $\sigma_{n_0}^{(1)}=(k_{n_0}=k^{(1)}, p_{n_0}=p^{(1)})$, $\sigma_{n_0}^{(2)}=(k_{n_0}=k^{(2)}, p_{n_0}=p^{(2)})$, and fix the strategy profile for all other users $\sigma_{-n_0}$. Throughout the proof, the superscript $(i)$, refers to the user strategies given that user $n_0$ plays strategy $\sigma_{n_0}^{(i)}$, for $i=1,2$. The term $I_n(k_n, k^{(i)}, p^{(i)}, \sigma_{-n,n_0})$ refers to the log-interference function (\ref{eq:log_interference}) when user $n$ chooses channel $k_n$, user $n_0$ plays strategy $(k^{(i)}, p^{(i)})$, and all other users except users $n, n_0$ play strategy $\sigma_{-n,n_0}$. The difference in the payoff function $\Delta\tilde{R}_{n_0}$ is given by:
\begin{center}
$
\bea{l}
\displaystyle F_{n_0}(\sigma_n^{(2)}, \sigma_{-n})-F_{n_0}(\sigma_n^{(1)}, \sigma_{-n}) \vspace{0.0cm}\\
=
\displaystyle\left[\log \left(u_{n_0}(k^{(2)})p^{(2)}\right)-I_{n_0}\left(k^{(2)},\sigma_{-n_0}\right)
\right.\vspace{0.0cm}\\\left. \hspace{3cm}
\displaystyle-\log\left(\frac{1}{1-p^{(2)}}\right)\left|\mathcal{I}_{n_0}(k^{(2)})\right|\right]
\vspace{0.0cm}\\\hspace{0.3cm}
-
\displaystyle\left[\log \left(u_{n_0}(k^{(1)})p^{(1)}\right)-I_{n_0}\left(k^{(1)},\sigma_{-n_0}\right)
\right.\vspace{0.0cm}\\\left. \hspace{3cm}
\displaystyle-\log\left(\frac{1}{1-p^{(1)}}\right)\left|\mathcal{I}_{n_0}(k^{(1)})\right|\right]
\vspace{0.0cm}\\\hspace{2cm}
\triangleq \Delta F_{n_0}\left(\sigma^{(1)}, \sigma^{(2)},\sigma_{-n_0}\right) \;.
\ena
$
\end{center}

We prove the theorem for $k^{(1)}\neq k^{(2)}$. The case where $k^{(1)}=k^{(2)}$ follows similarly with minor modifications. The difference in the proposed function (\ref{eq:exact_potential_function}) $\Delta\phi$ is given by:
\begin{center}
$\bea{l}
\vspace{0.0cm}
\Delta\phi\left(\sigma_{n_0}^{(1)}, \sigma_{n_0}^{(2)},\sigma_{-n_0}\right)
\triangleq\displaystyle \phi\left(\sigma_{n_0}^{(2)},\sigma_{-n_0}\right)-\displaystyle \phi\left(\sigma_{n_0}^{(1)},\sigma_{-n_0}\right) \vspace{0.0cm}\\ \hspace{0.0cm}
=
\displaystyle\sum_{n\neq n_0}\left[\log u_{n}(k_n)+\log p_n
 \displaystyle -I_n(k_n, k^{(2)}, p^{(2)}, \sigma_{-n,n_0})\right] \vspace{0.0cm}\\ \hspace{1cm}
\displaystyle+\log u_{n_0}(k^{(2)})+\log p^{(2)}-I_{n_0}(k^{(2)},\sigma_{-n_0}) \vspace{0.0cm}\\ \hspace{0.3cm}
\displaystyle-\sum_{n\neq n_0}\left[\log u_{n}(k_n)+\log p_n
 \displaystyle -I_n(k_n, k^{(1)}, p^{(1)}, \sigma_{-n,n_0})\right] \vspace{0.0cm}\\ \hspace{1cm}
\displaystyle+\log u_{n_0}(k^{(1)})+\log p^{(1)}-I_{n_0}(k^{(1)},\sigma_{-n_0}) \hspace{1cm}
\ena$
\end{center}
\begin{center}
$\bea{l}
=
\displaystyle-\sum_{n\in \mathcal{I}{n_0}: k_n=k^{(1)}} \displaystyle \log(1-p^{(1)})
\vspace{0.0cm}\\ \hspace{0.3cm}
+\displaystyle\sum_{n\in \mathcal{I}{n_0}: k_n=k^{(2)}} \displaystyle \log(1-p^{(2)}) \vspace{0.0cm}\\ \hspace{0.3cm}
\displaystyle+\log u_{n_0}(k^{(2)})+\log p^{(2)}-I_{n_0}(k^{(2)},\sigma_{-n_0}) \vspace{0.0cm}\\ \hspace{0.3cm}
\displaystyle-\left(\log u_{n_0}(k^{(1)})+\log p^{(1)}-I_{n_0}(k^{(1)},\sigma_{-n_0})\right) \vspace{0.0cm}
\ena$
\end{center}
\begin{center}
$\bea{l}
=
\displaystyle
-\log(1-p^{(1)})\left|\mathcal{I}_{n_0}(k^{(1)})\right|
%
+\displaystyle\log(1-p^{(2)})\left|\mathcal{I}_{n_0}(k^{(2)})\right|  \vspace{0.0cm}\\ \hspace{0.3cm}
\displaystyle+\log u_{n_0}(k^{(2)})+\log p^{(2)}-I_{n_0}(k^{(2)},\sigma_{-n_0}) \vspace{0.0cm}\\ \hspace{0.3cm}
\displaystyle-\left(\log u_{n_0}(k^{(1)})+\log p^{(1)}-I_{n_0}(k^{(1)},\sigma_{-n_0})\right) \vspace{0.0cm}
\ena$
\end{center}
\begin{center}
$\bea{l}
=
\displaystyle
\Delta F_{n_0}\left(\sigma^{(1)}, \sigma^{(2)},\sigma_{-n_0}\right)\;, \vspace{0.0cm}
\ena$
\end{center}
where we used the facts that only users in $\mathcal{I}_{n_0}$ that transmit over channels $k^{(1)}$ and $k^{(2)}$ experience a change in their interference level, and the contributions of user $n_0$ to the log-interference experienced by its neighbors that transmit over channels $k^{(1)}$ and $k^{(2)}$ are $-\log(1-p^{(1)})$ and $-\log(1-p^{(2)})$, respectively.
Hence, (\ref{eq:ordinal_potential_def}) follows. Furthermore, $\phi(\sigma)$ is upper bounded as follows: $\phi(\sigma)<\sum_{n=1}^{N}{\max_{k}\log\left({u}_n(k)\right)}$. As a result, $\phi(\sigma)$ in (\ref{eq:exact_potential_function}) is a bounded exact potential function of the fairness game which completes the proof.
\QEDA

\bibliographystyle{ieeetr}

\end{document}